\begin{document}
\title{Detecting unknots via equational reasoning, I:\\
Exploration \thanks{The final publication is available at http://link.springer.com.}}
\author{Andrew Fish\inst{1} and  Alexei Lisitsa\inst{2}}
\institute{{$^1$ School of Computing, Engineering and Mathematics, University of Brighton}\\ {$^2$ Department of Computer Science, The University of Liverpool}
\email{{{Andrew.fish@brighton.ac.uk, A.Lisitsa@csc.liv.ac.uk}}}}
\bibliographystyle{plain}%
\maketitle

\begin{abstract}
We explore the application of automated reasoning techniques to unknot detection, a classical problem of computational topology. We adopt a two-pronged experimental approach, using a theorem prover to try to establish a positive result (i.e. that a knot is the unknot), whilst simultaneously using a model finder to try to establish a negative result (i.e. that the knot is not the unknot). The theorem proving approach utilises equational reasoning, whilst the model finder searches for a minimal size counter-model. We present and compare experimental data using
the involutary quandle of the knot, as well as comparing with alternative approaches, highlighting instances of interest. Furthermore, we present theoretical connections of the minimal countermodels obtained with existing knot invariants, for all prime knots of up to 10 crossings: this may be useful for developing advanced search strategies.
\end{abstract}



\section{Introduction}
One of the most well-known and intriguing problems in computational topology is \emph{unknot detection} (UKD):
given a \emph{knot}, which is a closed loop without self-intersection embedded in 3-dimensional Euclidean space $\mathbb{R}^{3}$, is it possible to
deform $\mathbb{R}^3$ continuously such that the knot is transformed into a trivial unknotted circle without passing through itself?  Knots are often studied as a diagrammatic system: (i) a knot diagram is a regular projection of the knot onto a plane, having a finite number of singularities, all of which are transverse double points annotated to indicate which strand is passing over and which is passing under at each crossing; (ii) knots are equivalent if and only if their diagrams differ by a finite sequence of Reidemesiter moves~\cite{Reid}.  Figure 1 shows the
diagrams
of two knots with a negative a) and positive b) answers to the unknottedness question. All work and results stated assume that knots are tame (a common technical requirement which is generally imposed on knots, ruling out pathological cases such as permitting infinite sequences of trefoil-like knot pieces of decreasing sizes glued together within a knot); see~\cite{lickorish} for more details, for instance.

\begin{figure}[t!]
\hspace*{3cm}\includegraphics[scale=0.15]{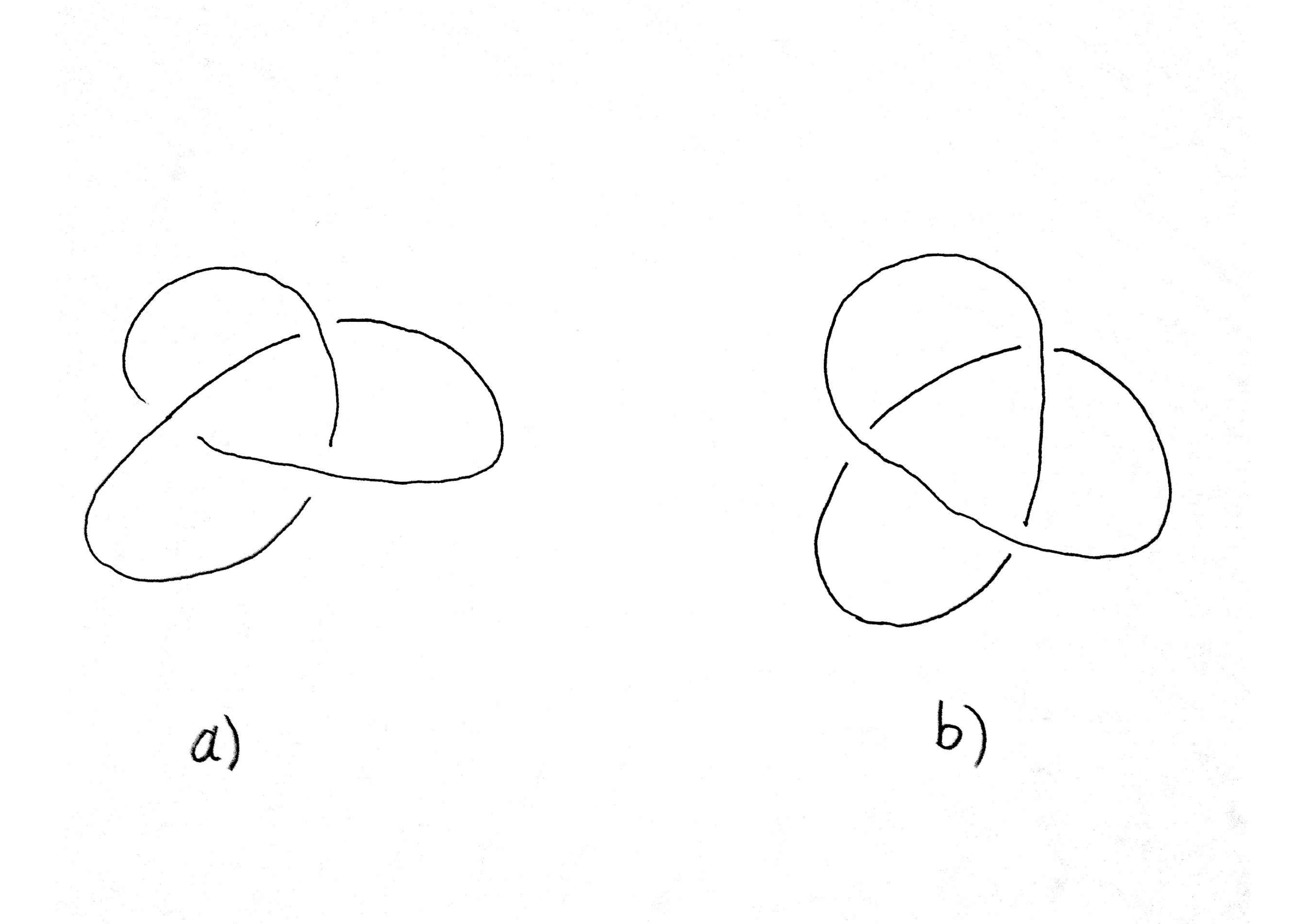}
\caption{a) non-trivial trefoil knot and b) trivial knot or unknot}
\end{figure}

The unknot detection (or unknot recognition) problem has attracted 
a lot of attention, but some of the fundamental questions about 
it still remain open. In particular, it is unknown whether it is 
possible to recognize unknots in PTIME. It is known, though, that 
the problem lies in NP $\cap$ coNP \cite{coNP,NP} (membership in $coNP$ 
is subject to generalized Riemann hypothesis holding). There has been a slow 
but steady development of algorithms for unknot detection and their experimental evaluation. An early algorithm, presented by W.~Haken  in his proof of the decidability of UKD~\cite{Hak61}, was developed for theoretical purposes and was deemed to be impractical due to being too complex to attempt to implement it. Since then various algorithms for unknot detection have been proposed with various degrees of implementability and efficiency \cite{Dyn}. The algorithms based on \emph{monotone simplifications}~\cite{Dyn} provide practically fast recognition of unknots but do not necessarily yield a decision procedure. The algorithms based on \emph{normal surface theory}, implemented in Regina system \cite{BenOZ12}, provide efficient recognition of non-trivial knots. In particular, it is reported that every non-trivial knot with crossing number $\le 12$ is recognized as such by the procedure from~\cite{BenOZ12} in under 5 minutes. There still are efficiency problems with the existing algorithms, which in the worst case are exponential, and it appears that establishing that a particular diagram with a few hundred (or even dozens of) crossings represents a non-trivial knot may well be out of reach of the available procedures.
Thus the exploration of alternative procedures for unknot detection 
is an interesting and well-justified task.


In this paper we explore the following route to the efficient practical algorithms for unknot detection. The unknotedness property can be faithfully characterized by the properties of algebraic invariants associated with knot projections. We attempt to establish the properties of concrete invariants by using methods and procedures developed in the \emph{automated reasoning} area. A key observation is that the task of unknot detection can be reduced to the task of (dis)proving a first-order formulae, and for this there are efficient generic automated procedures, notwithstanding the fact that generally first-order-order validity is undecidable.

\section{Involutory quandles and Unknot detection}

We provide relevant background definitions; for example, see~\cite{FR92,Joyce1982b,Winker,Carter} for further details.

\begin{definition}
Let $Q$ be a set equipped with a binary operation $\triangleright$ (product) such that the following hold:

\begin{description}
\item[Q1] $x \triangleright x = x$ for all $x \in Q$.
\item[Q2] For all $x,y \in Q$, there is a unique $z \in Q$ such that $x= z \triangleright y$.
\item[Q3] For all $x,y,z \in Q$, we have $(x \triangleright y) \triangleright z = (x \triangleright z) \triangleright (y \triangleright z)$.
\end{description}

Then $Q$ is called a \emph{quandle}$\!$~\footnote{A \emph{rack}~\cite{FR92} is such a $Q$ that satisfies $Q2$ and $Q3$ but not necessarily $Q1$.}. If $Q$ additionally satisfies the identity $Q2'$ below, then $Q$ is called an \emph{involutory quandle}:

\begin{description}
\item[Q2'] $(x \triangleright y) \triangleright y = x$ for all $x,y \in Q$.
\end{description}

\end{definition}

\begin{remark}
For a quandle $Q$, the unique element $z \in Q$ from axiom 2 is denoted by $z=x\triangleright^{-1}y$, and $\triangleright^{-1}$ also defines a quandle structure. However, for involutory quandles, we have $\triangleright = \triangleright^{-1}$, which can be taken as an equivalent definition of involutory; axiom 2' supersedes axiom 2.
\end{remark}


\begin{definition}
A function $\phi: Q_1 \rightarrow Q_2$ between quandles is a homomorphism if $(a \triangleright b) \phi = (a)\phi \triangleright (b)\phi$ for any $a,b \in Q_1$.
\end{definition}

%
%

Given a knot $K$ (i.e. a circle embedded in $\mathbb{R}^3$), a well known invariant is the \emph{knot group} of $K$, which is $\pi(K)=\pi_1(\mathbb{R}^3-K)$, the fundamental group of the complement of the knot $K$ in $\mathbb{R}^3$ (i.e. homotopy class of paths in the complement of the knot). One can compute a presentation of the knot group, in terms of generators and relations, from a knot diagram, following Wirtinger (e.g. see~\cite{lickorish} for details). 
An analogous construction can be used to construct a presentation of the \emph{quandle} of the knot, $Q(K)$ (e.g. see~\cite{Winker} for details).  One acquires the presentation of the knot group from the presentation of the quandle by considering the generators and relations in the group, and imposing the quandle operation to be conjugation. Since we focus primarily on involutory quandles, we provide the simplified construction for these below; a method for generalising to (not involutory) quandles is to assign an orientation to the knot, which yields a sign for each crossing according to the relative orientations of the involved curves, and then the relation assigned to the crossing is either $a \triangleright b = c$ or $a \triangleright^{-1} b = c$ according to the sign of the crossing. Interpreting the quandle relation $\triangleright$ as conjugation in the knot group (i.e.  $a \triangleright b = b^{1} a b$), and $\triangleright^{-1}$ as its inverse (i.e. $a \triangleright^{-1} b = b a b^{1}$) returns the well known Wirtinger presentation of the knot group.

\begin{definition}
A presentation of the involutory knot quandle, $IQ(K)$ for a knot $K$, is obtained from a diagram $D$ for $K$ as follows: a \emph{solid arc} of the diagram is  %
an unbroken line of the diagram with an \emph{undercrossings} at each of its ends;
%
%
every solid arc of the diagram is labelled by an unique label; all labels of $D$ form the set $G_{D}$ of \emph{generators}; to every crossing of $D$ one associates a relation, as shown in Figure~\ref{fig:crossingrelation}; denote the set of all such relations by $R_{D}$. Then the \emph{presentation} $\langle G_{D}\mid R_{D} \rangle$ defines the involutory quandle $IQ(D)$. This is a quotient of the free involutory quandle modulo the equational theory defined by $R_{D}$.
\end{definition}

The three equalities $Q1,Q2'$ and $Q3$ form an equational theory of involutory quandles, which we denote by $E_{iq}$.


\begin{figure}[t!]
\hspace{2cm}\includegraphics[scale=0.15]{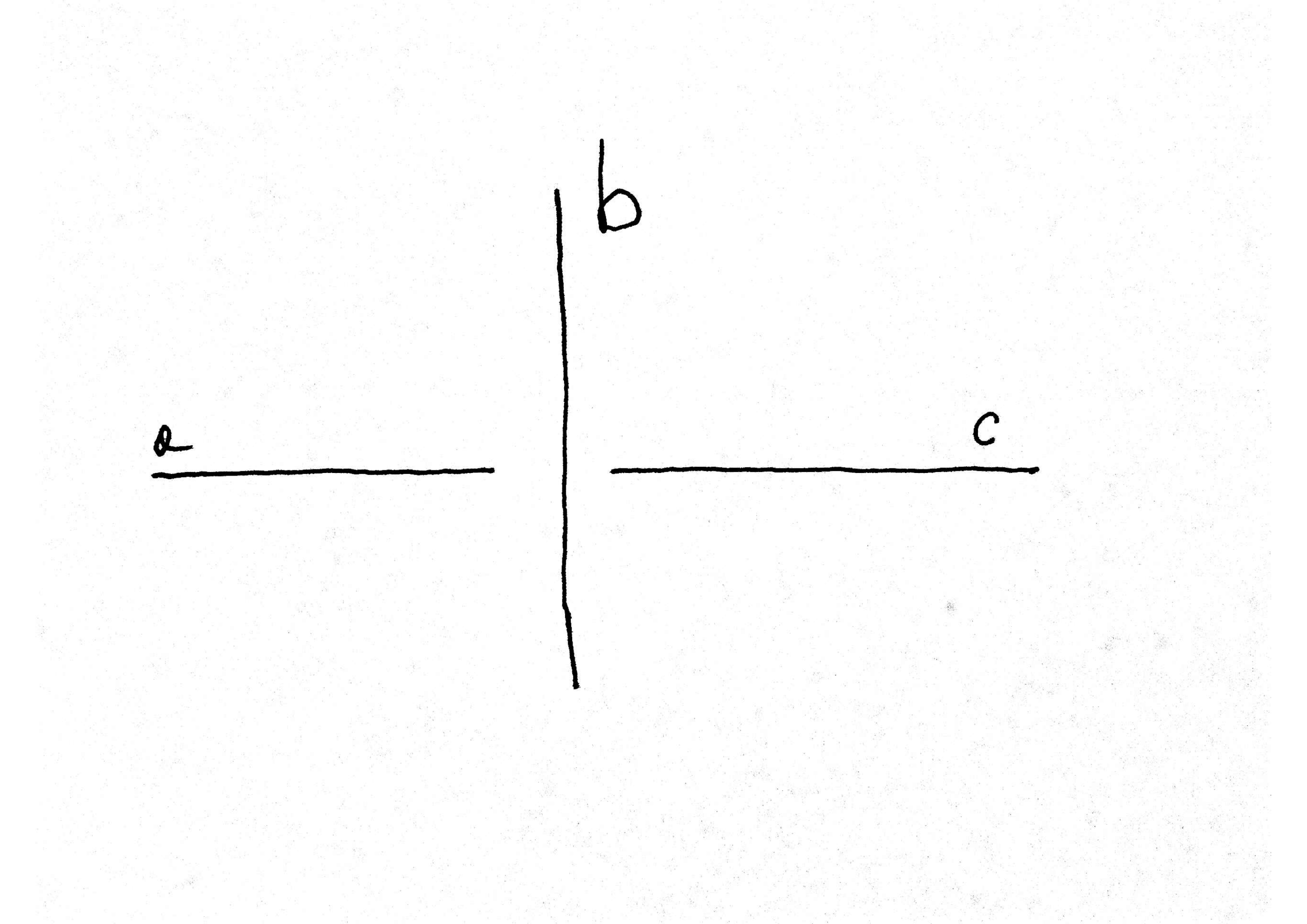} \includegraphics[scale=0.15]{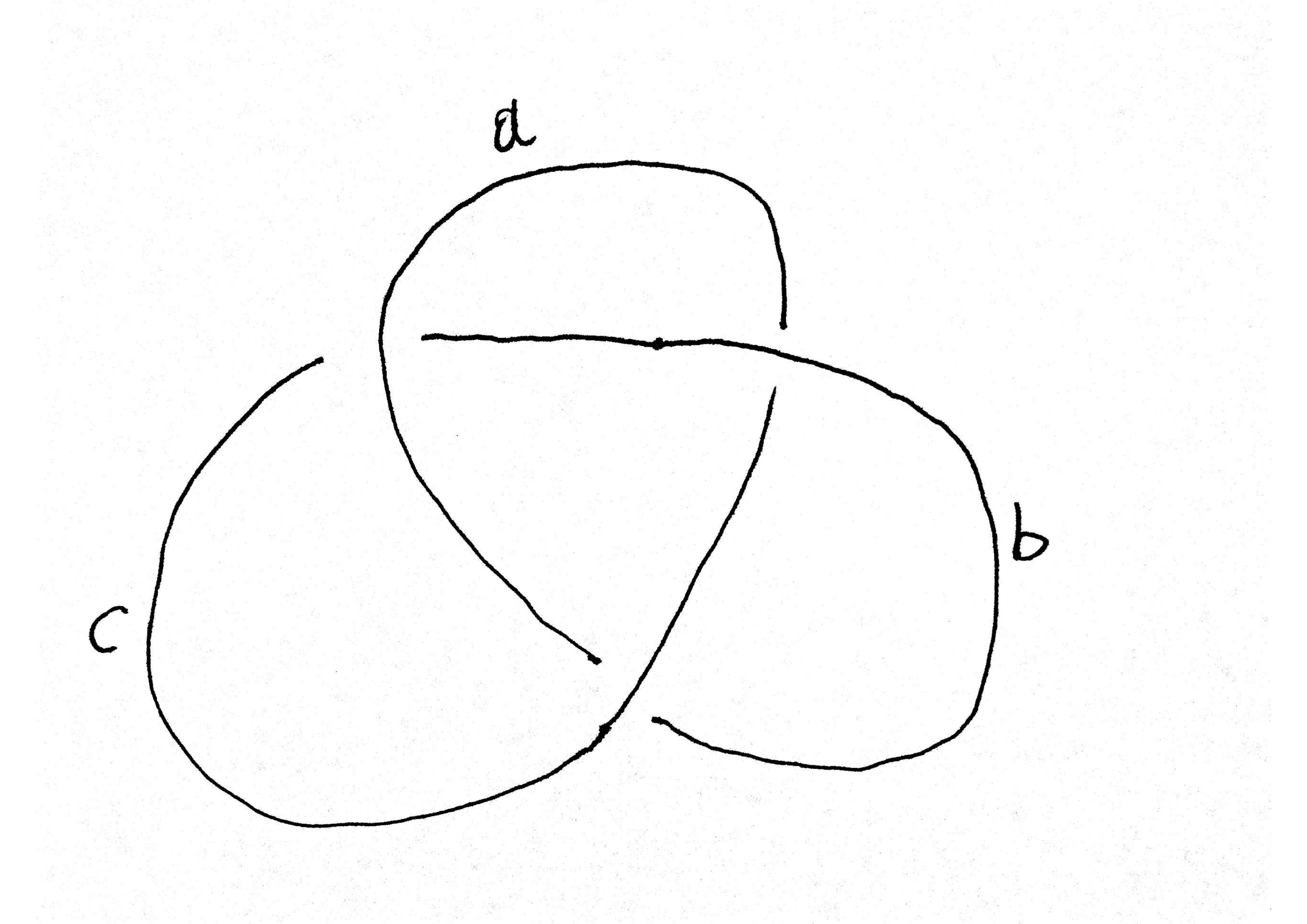}
\caption{(a) Left: A labelled crossing and its corresponding relation $a \triangleright b = c$; here $a$ and $c$ are the labels of the underarcs at this crossing, whilst $b$ is the label of the overarc, and we often identify the arcs with their labels to simplify language in discussions. \newline \hspace*{10mm} (b) Right: The trefoil knot diagram, with solid arcs $a$,$b$,$c$.}
\label{fig:crossingrelation} \label{fig:trefoil}
\end{figure}



\begin{example}
Let $D_{tr}$ be the diagram of the trefoil knot $K$ shown in Figure~\ref{fig:trefoil}. The involutory quandle of $D_{tr}$ is defined by the presentation
$IQ(D_{tr}) = \langle a,b,c \mid  a \triangleright b = c, b \triangleright c = a, c \triangleright a = b \rangle$. For comparison, the quandle $Q(D_{tr})$ has presentation
$Q(D_{tr}) = \langle a,b,c \mid  a \triangleright^{-1} b = c, b \triangleright^{-1} c = a, c \triangleright^{-1} a = b \rangle$, whilst the knot group has presentation
$G(K) = \langle a,b,c \mid  b a b^{-1} = c, c b c^{-1} = a, a c a^{-1} = b \rangle$. In detail, consider the crossing of the diagram in which $a$ and $c$ are underarcs, whilst $b$ is an overarc; i.e. match up the crossing locally with (a rotated version of) Figure 2(a). This gives rise to either the relation $a \triangleright^{-1} b = c$ or $a \triangleright b = c$, depending on whether the \emph{sign} of the crossing is negative or positive, respectively. One method for reading off the sign is to choose an orientation of the knot (i.e. pick a direction on the knot, often depicted using an arrowhead) and if $a$ is the approaching underarc, following orientation (one can traversing a knot, or part of a knot, intuitively being a walk along the around the knot along the arcs; then following orientation means that one is traversing the arc in the direction determined by the orientation), check if one turns left or right, respectively, when passing onto the overarc $b$, following orientation. In this case,
all three crossings are negative; a mirror of the diagram (i.e. exchanging all over and under crossings) would have the above presentation for knot quandle, but with $\triangleright^{-1}$ replaced by $\triangleright$, and similarly for the knot group.



%
\end{example}

\subsection{Overview of the approach}

The importance of involutory quandles, in the context of unknot detection, relies on the following properties \cite{Joyce1982a,Joyce1982b,Winker}:

\begin{itemize}
\item Involutory quandle is a knot invariant, i.e. it does not depend on the choice of diagram;  
\newline [Theorem 15.1 of~\cite{Joyce1982a} shows that the quandle $Q(K)$ of
knot $K$ is an invariant of the knot type of $K$, and the involutory quandle $IQ(K)$ is a homomorphic image of $Q(K)$]
\item Involutory quandle $IQ(K)$ of a knot $K$ is trivial (i.e. it contains a single element $e$ with $e *e = e$) if and only if $K$ is the \emph{unknot}. \newline [Theorem 5.2.5 of~\cite{Winker}].
\end{itemize}

These properties suggest the following approach to unknot detection. Given a knot diagram, one can try to decide whether its associated involutory quandle is trivial. Notice that an involutory quandle of a
knot can be an infinite set~\cite{Winker}. Not much progress has been made towards the development of specific decision procedures for such a problem, apart of that presented in the thesis of S.~Winker~\cite{Winker}; the diagrammatic method presented there, together with details and explanations, allows one to construct the involutory quandles for many knot diagrams, and in our opinion, is a very good starting point for developing algorithmic procedures directly dealing with the involutary quandles. 
In this paper, we take an alternative route and propose, instead of applying a specific involutory quandles decision procedure, to tackle unknot detection as follows:

\begin{itemize}
\item Given a knot diagram, compute its involutary quandle presentation;
\item Convert the task of involutary quandle triviality detection into the task of proving a first-order equational formula;
\item Concurrently, apply generic automated reasoning tools for first-order equational logic to tackle the (dis)proving task
\end{itemize}

Thus, we concurrently search for a proof and for a model to disprove the formula.
After explaining the details, we apply these methods in parallel, present empirical data for many knots, and
compare the countermodels with existing knot invariants that are encoded in the
smallest homomorphic image of the involutory quandle of the knot. As an overview
of the related objects, Figure~\ref{fig:overview} shows the following. In the top row, we have the knot $K$, fundamental quandle of the knot, $Q(K)$, and the projection from $Q(K)$ onto the fundamental group $\pi(K)$ obtained by forgetting information (peripheral subgroup and meridian, discussed later), and essentially setting the quandle operation to be conjunction. In the second row, the diagram $D_K$ of the knot $K$, together with presentations of $Q(K)$ and $\pi(K)$ obtained from those diagrams. The third row shows the involutory quandle presentation $IQ(K)$ obtained by identifying the quandle operation with its inverse. Here $C$ is a finite involutory quandle that is the homomorphic image of the involutory quandle of the knot. Then, if $X$ is any quandle, one can ask if imposing the involutory condition on $X$ can yield such a $C$.

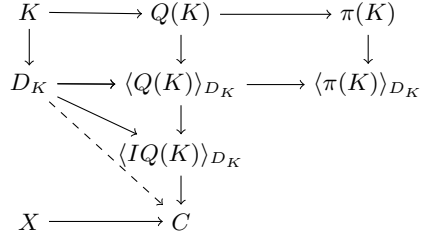
\begin{figure}[t!]
    \centering
    \begin{tikzpicture}
      \matrix (m) [matrix of math nodes,column sep=0.7cm, row sep=0.4cm]
  {
   K & Q(K) & \pi(K) \\
   D_K & \langle Q(K) \rangle_{D_K} & \langle \pi(K) \rangle_{D_K}  \\
     & \langle IQ(K) \rangle_{D_K} &  \\
    X & C &  \\
  };

\foreach \x in {1,2} {
   \draw (m-\x-1) edge[->] (m-\x-2) (m-\x-2) edge[->] (m-\x-3);

   }
\foreach \x in {2} {
   \draw (m-\x-1) edge[->] (m-\x-2);
   }

 \foreach \y in {1,2,3} {
      \draw (m-1-\y) edge [->] (m-2-\y);
     }

   \foreach \y in {2} {
      \draw (m-2-\y) edge [->] (m-3-\y);
     }

    \foreach \y in {2} {
      \draw (m-3-\y) edge [->] (m-4-\y);
     }

\draw (m-4-1) edge[->] (m-4-2); 

\draw (m-2-1) edge[dashed,->] (m-4-2);

\draw (m-2-1) edge[->] (m-3-2);

    \end{tikzpicture}

\caption{An overview of the objects related to the unknot detection programme.}
\label{fig:overview}
\end{figure}


\subsection{Unknot detection by equational reasoning}~\label{sec:untangle}

Given a knot diagram $D$, with $n$ arcs, consider its involutory quandle representation $IQ(D) = \langle G_{D}  \mid R_{D} \rangle$ with $G_{D} = \{a_{1}, \ldots, a_{n} \}$. Denote by $E_{iq}(D)$ an equational theory of $IQ(D)$, i.e.  $E_{iq}(D) = E_{iq} \cup R_{D}$.  It is
known that the axioms of (involutory) quandles are algebraic 
counterparts of the Reidemeister moves (see further discussion of 
that in Section~\ref{sec:untangle}).

\begin{proposition}\label{prop:iq-detection}
A knot diagram $D$ is a diagram of the unknot if and only if $E_{iq}(D) \vdash \wedge_{i=1 \ldots
n-1} (a_{i} = a_{i+1})$, where $\vdash$ denotes derivability in the equational logic (or, equivalently in the first-order logic with equality).
\end{proposition}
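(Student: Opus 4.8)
The plan is to factor the biconditional through the intermediate condition that the presented involutory quandle $IQ(D)$ is trivial, using the two properties of involutory quandles cited above: Winker's theorem (triviality is equivalent to unknottedness) together with invariance (so that $IQ(D)$ depends only on the knot type of $K$ and not on the particular diagram $D$), and then separately using the completeness of equational logic (Birkhoff) to connect triviality with derivability from $E_{iq}(D)$.

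First I would record the elementary observation that $IQ(D)$ is trivial, i.e. a single element, if and only if all generators collapse, meaning $a_i = a_j$ holds in $IQ(D)$ for all $i,j$. The nontrivial direction is that collapsing the generators forces the whole quandle to collapse: since $IQ(D)$ is generated by $G_D = \{a_1,\ldots,a_n\}$, every element is a $\triangleright$-term in the generators, so if $a_1 = \cdots = a_n = e$ then an induction on term structure using axiom $Q1$ (base case the generators; inductive step $t_1 \triangleright t_2 = e \triangleright e = e$) shows every element equals $e$, whence $IQ(D) = \{e\}$. The converse is immediate, and by transitivity of equality the condition ``$a_i = a_j$ for all $i,j$'' is equivalent to the conjunction $\wedge_{i=1\ldots n-1}(a_i = a_{i+1})$ appearing in the statement.

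Next I would invoke the fact that $IQ(D)$, being the quotient of the free involutory quandle modulo $R_D$, is exactly the syntactic (term) model of the equational theory $E_{iq}(D) = E_{iq} \cup R_D$: its elements are equivalence classes of terms under provable equality, so two terms $s,t$ satisfy $s = t$ in $IQ(D)$ if and only if $E_{iq}(D) \vdash s = t$. Applying this with $s = a_i$ and $t = a_{i+1}$, the triviality of $IQ(D)$ becomes equivalent to $E_{iq}(D) \vdash \wedge_{i=1\ldots n-1}(a_i = a_{i+1})$. Chaining this equivalence with Winker's theorem ($IQ(D)$ trivial iff $K$ is the unknot) and invariance yields the desired biconditional. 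The main obstacle is making precise this bridge between the algebraic and the syntactic pictures, namely the identification of the presented quandle $IQ(D)$ with the term model of $E_{iq}(D)$ and the attendant appeal to Birkhoff completeness; once that correspondence is established, the statement follows from transitivity of equality and the two cited properties of the involutory quandle.
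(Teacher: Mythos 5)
Your proposal is correct and follows essentially the same route as the paper's proof: factoring the biconditional through Winker's triviality criterion for $IQ(D)$ and then connecting triviality of $IQ(D)$ with derivability from $E_{iq}(D)$ via Birkhoff's soundness and completeness theorem (your identification of $IQ(D)$ with the term model, plus the generator-collapse induction, just fills in details the paper dismisses as an ``easy consequence''). The only point you omit is the parenthetical about first-order logic with equality, which the paper handles in one line by citing conservativity of first-order logic with equality over equational logic for equational theories.
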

{\bf Proof. }{\em (Sketch)} $D$ is a diagram of unknot iff $IQ(D)$ is a trivial involutive quandle \cite{Winker}.
The proposition  ``$IQ(D)$ is a trivial involutory quandle iff  $E_{iq}(D) \vdash \wedge_{i=1 \ldots
n-1} (a_{i} = a_{i+1})$" is an easy consequence of the soundness and  completeness of equational logic (Birkhoff Theorem)~\cite{Bir35}. See also Lemma 4.2.7 p. 30 of \cite{Winker}. The case of first-order logic with equality follows from the conservativity of first-order logic with equality over equational logic for equational theories. \hfill{\qed}

~ \newline
Now,
if  $E_{iq}(D) \vdash \wedge_{i=1 \ldots
n-1} (a_{i} = a_{i+1})$ holds true, then this fact can be established by a proof of the formula  $E_{iq}(D) \rightarrow  \wedge_{i=1 \ldots
n-1} (a_{i} = a_{i+1})$ by a \emph{complete} automated theorem prover for first-order logic with equality, of which there are
many around, see e.g.~\cite{ATP}.
By a complete theorem prover we mean an automated procedure which, given a valid formula, terminates with a proof of the formula.

For an introduction to automated theorem proving see e.g.~\cite{GLM}.
In order to show that $E_{iq}(D) \vdash \wedge_{i=1 \ldots
n-1} (a_{i} = a_{i+1})$ does not hold, it suffices to disprove $E_{iq}(D) \rightarrow  \wedge_{i=1 \ldots
n-1} (a_{i} = a_{i+1})$. We propose to do this by the application of generic finite model finding procedures \cite{Model,McCune} to find a finite countermodel to the formula, or equivalently a finite model for $E_{iq}(D) \land  \neg\wedge_{i=1 \ldots
n-1} (a_{i} = a_{i+1})$.
So, the unknot detection procedure $P$ which we propose here consists of the parallel composition of
\begin{itemize}
\item automated proving  $E_{iq}(D) \rightarrow  \wedge_{i=1 \ldots
n-1} (a_{i} = a_{i+1})$, and
\item  automated disproving  $E_{iq}(D) \rightarrow  \wedge_{i=1 \ldots
n-1} (a_{i} = a_{i+1})$ by a finite model finder.
\end{itemize}

It is obvious that the parallel composition above provides with \emph{at least} a semi-decision algorithm for unknotedeness.
If $D$  is a diagram of the unknot then the termination of the theorem proving is guaranteed by the completeness of a theorem prover. On the other hand, if $D$ is a diagram of a non-trivial knot then the termination can be guaranteed only if a \emph{finite} countermodel exists. In general, in the first-order logic,  there are formulae which can only be refuted on \emph{infinite} countermodels, so for arbitrary formulae the termination of the automated disproving cannot be guaranteed.

For the specific type of formulae $E_{iq}(D) \rightarrow  \wedge_{i=1 \ldots n-1} (a_{i} = a_{i+1})$  we conjecture that they have \emph{finite countermodel property}, that is if there exists a countermodel for a formula of this form at all, then there is a finite countermodel too. Since a countermodel for $E_{iq}(D) \rightarrow  \wedge_{i=1 \ldots
n-1} (a_{i} = a_{i+1})$ is a model for  $E_{iq}(D) \land \neg \wedge_{i=1 \ldots
n-1} (a_{i} = a_{i+1})$, it follows that: 1) such a countermodel is a \emph{homomorphic} image of the involutory quandle $IQ(D)$ of $D$, by satisfaction of $E_{iq}(D)$; and 2) it is non-trivial involutory quandle, by satisfaction of $\neg \wedge_{i=1 \ldots n-1} (a_{i} = a_{i+1})$.

Thus the required \emph{finite countermodel} property, for the programme to yield a decision procedure, is equivalent to the property of the involutory quandles of knots being \emph{residually finite}, as formulated in the following conjecture.

\begin{conjecture}[Involutory quandles are finitely residual] ~\label{conj:resfinite}
For any knot diagram $D$, if $IQ(D)$ is not trivial (i.e. consists of more than 1 element), then there is a finite non-trivial involutory quandle $Q$ which is a homomorphic image of $IQ(D)$.
\end{conjecture}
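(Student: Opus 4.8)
The plan is to reduce the conjecture to a residual finiteness statement about a group naturally associated with the knot, and then to invoke the (now established) residual finiteness of $3$-manifold groups. Throughout, fix a knot $K$ with diagram $D$ and a meridian $\mu$. I would introduce the \emph{involutory knot group}
$\Gamma = \pi_1(S^3 \setminus K)/\langle\langle \mu^2 \rangle\rangle$,
i.e. the quotient of the knot group by the normal closure of the squares of the meridians; this is the analogue, in the involutory setting, of the way the ordinary knot group accompanies the fundamental quandle. Topologically $\Gamma$ is the orbifold fundamental group of $S^3$ with a cone singularity of angle $\pi$ along $K$, and it fits in an extension $1 \to \pi_1(\Sigma_2 K) \to \Gamma \to \mathbb{Z}/2 \to 1$, where $\Sigma_2 K$ is the closed orientable $3$-manifold obtained as the double cover of $S^3$ branched over $K$. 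The first step is to establish a surjective quandle homomorphism from $IQ(D)$ onto the subquandle $\mathcal{R} \subseteq \Gamma$ consisting of the conjugates of the image $\bar\mu$, equipped with the conjugation operation $x \triangleright y = y x y$ (legitimate since each such conjugate is an involution). This is the exact involutory counterpart of the classical identification, interpreting $\triangleright$ as conjugation just as in the passage from the knot quandle to the Wirtinger presentation: the relations $R_D$ read off the crossings become conjugation relations in $\Gamma$, and the identities $Q1$, $Q2'$, $Q3$ hold automatically for conjugation by involutions, so the assignment of each arc-label to the corresponding meridian-conjugate descends to a well-defined quandle surjection $IQ(D) \to \mathcal{R}$.

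Next I would prove that $\mathcal{R}$ is itself non-trivial whenever $K$ is non-trivial, which crucially lets me avoid having to prove that the above surjection is also injective. If $\mathcal{R}$ were trivial, then $\bar\mu$ would be central in $\Gamma$; since the meridians normally generate the knot group, $\bar\mu$ normally generates $\Gamma$, and a central element of order dividing $2$ that normally generates forces $\Gamma$ to be trivial or $\mathbb{Z}/2$, hence $\pi_1(\Sigma_2 K) = 1$. By the Poincar\'e conjecture this makes $\Sigma_2 K = S^3$, and by the Smith conjecture this happens only for the unknot. Thus for non-trivial $K$ the quandle $\mathcal{R}$ has at least two elements. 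Combined with the cited fact that $IQ(D)$ is non-trivial exactly when $K$ is non-trivial, we may assume $K$ non-trivial and hence fix two distinct elements $\bar p \neq \bar q$ of $\mathcal{R}$.

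The final step supplies the finite image. Here $\pi_1(\Sigma_2 K)$ is the fundamental group of a closed orientable $3$-manifold, hence residually finite (Hempel, via geometrization); as it sits inside $\Gamma$ with index $2$, the group $\Gamma$ is residually finite as well. Applying this to $\bar p \bar q^{-1} \neq 1$ yields a homomorphism $\psi$ from $\Gamma$ onto a finite group $F$ with $\psi(\bar p) \neq \psi(\bar q)$. Then $\psi(\mathcal{R})$ is the conjugacy class of the involution $\psi(\bar\mu)$ in $F$, which is a finite involutory quandle, and it is non-trivial because it contains the two distinct elements $\psi(\bar p), \psi(\bar q)$. Composing the surjection $IQ(D) \to \mathcal{R}$ with $\psi$ exhibits $\psi(\mathcal{R})$ as a finite non-trivial involutory quandle that is a homomorphic image of $IQ(D)$, which is exactly the assertion of the conjecture.

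The main obstacle is the first step: making the quandle-to-group dictionary in the involutory setting fully rigorous, i.e. verifying that $\Gamma$ is as described and that $IQ(D)$ genuinely surjects onto the meridian conjugacy class $\mathcal{R}$ with the stated operation (the involutory analogue of Joyce's theorem). A secondary, deeper reliance is on residual finiteness of $3$-manifold groups through geometrization; while this is now a theorem, it is the engine of the whole argument, and any attempt to bound effectively the size of the resulting finite quandle, as would be desirable for the search strategies discussed in the paper, would have to confront the ineffective nature of that machinery. I would also note that if one could instead establish the \emph{isomorphism} $IQ(D) \cong \mathcal{R}$, the appeal to the Smith conjecture in the second step would become unnecessary; but proving faithfulness of the dictionary map is itself delicate, so routing non-triviality through $\pi_1(\Sigma_2 K)$ is the more economical path.
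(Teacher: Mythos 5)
The paper does not prove this statement at all: it appears only as Conjecture~1 (so that the procedure $P$ becomes a decision procedure), accompanied by a remark that knot groups are residually finite (Hempel, Thurston) and that it is unclear how to lift this to quandles because the quandle encodes the whole group system $(G,P,m)$. Your proposal therefore attempts something the paper leaves open, and in outline it is correct; moreover it circumvents precisely the obstacle the paper's remark identifies. The key point is that you never need the faithful (Joyce-type) dictionary between $IQ(K)$ and group data: you use only the easy, surjective half, namely that sending each arc generator to the image of its Wirtinger meridian in $\Gamma=\pi_1(S^3\setminus K)/\langle\langle\mu^2\rangle\rangle$ respects the crossing relations (both signs of the Wirtinger relation collapse to $\bar\mu_c=\bar\mu_b\bar\mu_a\bar\mu_b$ once meridian squares die), together with the observation that every element of $\Gamma$ is a product of meridian involutions, so the subquandle generated by the arc meridians is the whole class $\mathcal{R}$. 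Non-triviality of $\mathcal{R}$ is then recovered topologically (Poincar\'e plus Smith) instead of from injectivity of any dictionary, and residual finiteness enters only through the closed manifold $\Sigma_2 K$, where Hempel-plus-geometrization applies, combined with the standard fact that a finite extension of a residually finite group is residually finite. Two details should be written out to make this airtight: (i) the identification $\ker(\Gamma\to\mathbb{Z}/2)\cong\pi_1(\Sigma_2 K)$, which uses that the normal closure of $\mu^2$ in $\pi_1(S^3\setminus K)$ coincides with its normal closure in the index-two subgroup (because $\mu$ centralises $\mu^2$), and that filling the lifted curve $\widetilde{\mu^2}$ in the double cyclic cover of the complement yields the branched cover; and (ii) the reduction via Winker's theorem uses only the easy direction (the unknot has trivial $IQ$), so the hypothesis that $IQ(D)$ is non-trivial indeed gives a non-trivial knot. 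What your route buys is a complete, though ineffective, resolution of the conjecture, upgrading the paper's procedure $P$ to a decision procedure; what it costs is geometrization-strength machinery, so it yields no bound on the size of the finite quandle produced --- which is consistent with the paper's empirical picture, where the minimal countermodels are conjugacy classes of involutions in small finite quotients of $\Gamma$ (for instance the dihedral quandles $R_p$ arise from dihedral quotients).
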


We remark on the conjecture. Following Hempel and Thurston, we know that knot groups are residually finite. Thurston~\cite{Thurston} states that ``It is a standard fact that a finitely generated subgroup of $GL_n(\mathbb{Q})$ (the general linear group with coefficients in the rationals) is residually finite. Using this, one easily sees that the fundamental group of any geometric 3-manifold is residually finite. After a certain amount of fussing, one can assemble finite quotients of the fundamental groups of pieces of a geometric decomposition of a 3-manifold to obtain finite quotients of the fundamental group of the entire manifold.". The question of whether the proof can be lifted to quandles is another matter. Since the knot quandle contains the same information as a group system which is the triple $(G,P,m)$ consisting of the knot group $G$, a peripheral subgroup $P$, and a
meridian $m$ in $P$, one would require that the \emph{group system} (see e.g. \cite{Winker,Joyce1982a}) is somehow preserved, and the finite homomorphic image has an induced quandle structure.

\begin{theorem}
The unknot detection procedure, $P$, given above, is a decision procedure if conjecture~\ref{conj:resfinite} holds and a semi-decision procedure otherwise.
\end{theorem}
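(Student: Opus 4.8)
The plan is to decompose the claim into its two assertions and verify each by combining Proposition~\ref{prop:iq-detection} with the completeness guarantees of the two search procedures that run in parallel inside $P$. Throughout I would lean on the equivalences already recorded in the excerpt: that $D$ is an unknot diagram iff $E_{iq}(D) \vdash \wedge_{i} (a_i = a_{i+1})$, and that any finite countermodel to the formula $E_{iq}(D) \rightarrow \wedge_i (a_i = a_{i+1})$ is precisely a finite non-trivial involutory quandle that is a homomorphic image of $IQ(D)$ (satisfying $E_{iq}(D)$ by being a homomorphic image, and falsifying $\wedge_i (a_i = a_{i+1})$ by being non-trivial).

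First I would establish that $P$ is always at least a semi-decision procedure for unknottedness, independently of the conjecture. If $D$ is a diagram of the unknot, then by Proposition~\ref{prop:iq-detection} the implication $E_{iq}(D) \rightarrow \wedge_i (a_i = a_{i+1})$ is valid, so by completeness of the theorem prover the proving branch terminates with a proof and $P$ correctly reports ``unknot''. This secures termination on every positive instance and is exactly the semi-decidability guarantee.

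Next, assuming Conjecture~\ref{conj:resfinite} holds, I would dispatch the remaining case. Suppose $D$ is a diagram of a non-trivial knot; then by Proposition~\ref{prop:iq-detection} the quandle $IQ(D)$ is non-trivial, so the conjecture yields a finite non-trivial involutory quandle $Q$ that is a homomorphic image of $IQ(D)$. By the correspondence above, $Q$ is a finite countermodel to $E_{iq}(D) \rightarrow \wedge_i (a_i = a_{i+1})$, and since the finite model finder enumerates models of increasing cardinality it is guaranteed to find some such model and terminate, with $P$ correctly reporting ``non-trivial knot''. Combined with the previous paragraph, $P$ now halts with the correct verdict on every input, hence is a decision procedure.

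Finally, if the conjecture fails, I would argue that $P$ stays a semi-decision procedure but need not decide: there is a non-trivial diagram $D$ whose $IQ(D)$ admits no finite non-trivial homomorphic image, equivalently whose associated formula has no finite countermodel; as the formula is also invalid (so no proof exists), neither branch of $P$ can terminate on this input, while positive instances still terminate as shown above. The only point demanding real care is the \emph{tight} correspondence between finite countermodels and finite non-trivial homomorphic images of $IQ(D)$, together with the appeal to completeness of the model finder for finite satisfiability; but the former is precisely what was isolated in the discussion preceding Conjecture~\ref{conj:resfinite}, so once it is invoked the remainder is routine bookkeeping over Proposition~\ref{prop:iq-detection} and the two completeness properties. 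I therefore expect no genuine obstacle beyond making this correspondence and the enumeration argument fully explicit.
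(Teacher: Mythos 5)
Your proposal is correct and follows essentially the same route as the paper, whose justification for this theorem is not a separate proof but the discussion immediately preceding Conjecture~\ref{conj:resfinite}: semi-decidability from Proposition~\ref{prop:iq-detection} plus completeness of the theorem prover, and decidability under the conjecture via the correspondence between finite countermodels of $E_{iq}(D) \rightarrow \wedge_{i}(a_i = a_{i+1})$ and finite non-trivial involutory-quandle homomorphic images of $IQ(D)$, found by exhaustive finite model search. Your closing caveat is well placed, but you gloss it at the same level as the paper does; the one detail neither spells out is why a non-trivial finite homomorphic image forces the generator images to be non-constant (because the image is generated by them and a one-generator involutory quandle is trivial).
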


In the next section we illustrate the practical applicability of the proposed (semi)-decision procedure to various instances of unknot detection problem. In the experiments, we use an automated theorem prover Prover9 and a finite model finder Mace4, both by W.~McCune~\cite{McCune}. We present some examples of the results of the approach on unknots with interesting properties.

\section{Experiments: detecting unknots}

\subsubsection{Culprit Unknot} is shown in Figure~\ref{fig:culprit}. This is an interesting unknot which, during any untangling of it by Reidemeister moves, necessarily requires an increase in the number of crossings. 
The formula of the form $E_{iq}(D) \rightarrow  \wedge_{i=1 \ldots n-1} (a_{i} = a_{i+1})$ for the culprit unknot diagram in the syntax of Prover9/Mace4 (with $\ast$ denoting involutory quandle operation $\triangleright$) is presented to the right of the figure. Prover9 proves the formula in 0.03 seconds demonstrating thereby that culprit is indeed the unknot. 
The entire proof can be found in Appendix
%


\begin{figure}[t!]
\centering
\begin{minipage}[l]{.3\textwidth}
\centering
    \includegraphics[scale=0.15]{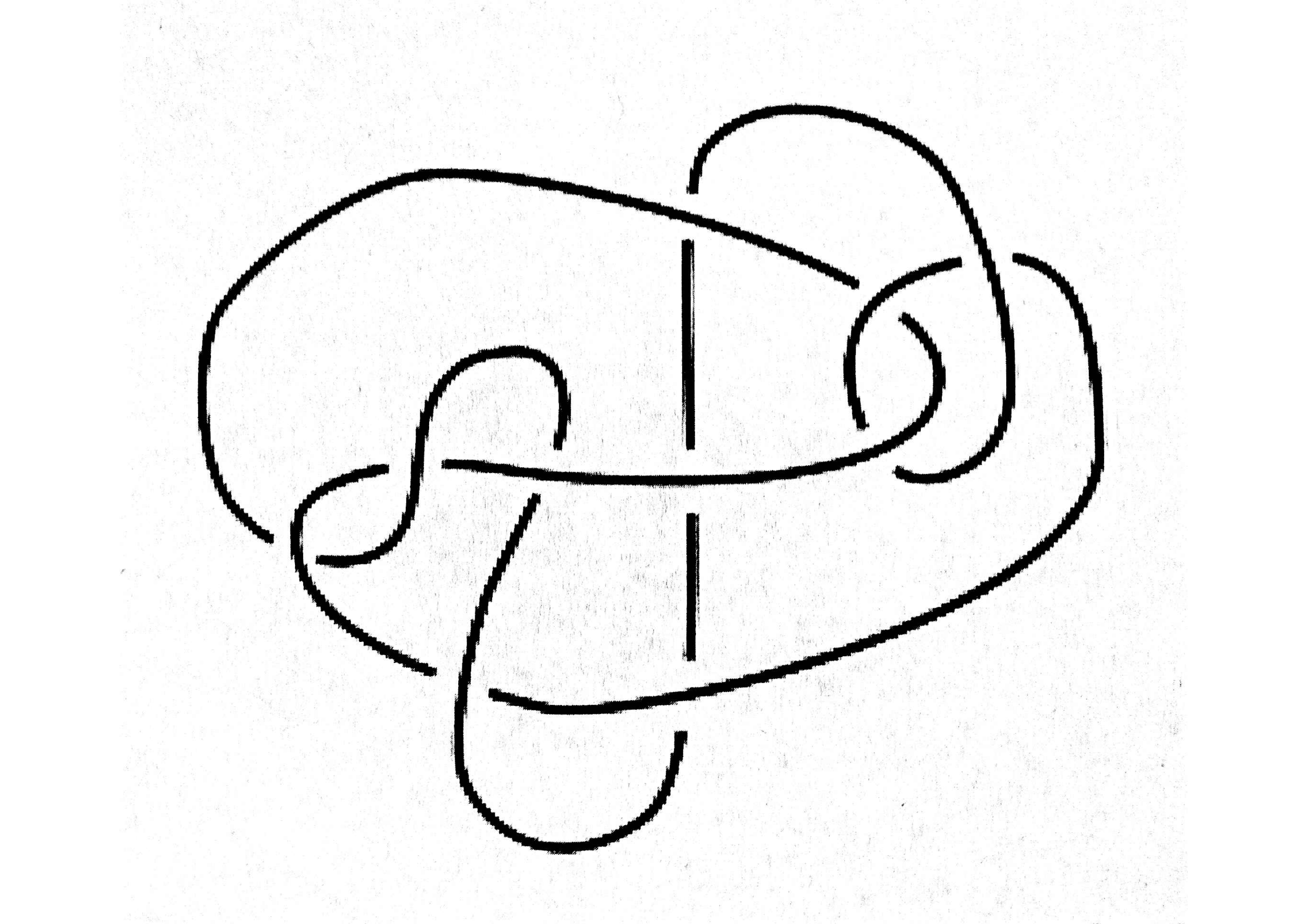}
    \caption{Culprit Unknot}
    \label{fig:culprit}
\end{minipage}
\hspace{.05\textwidth}
\begin{minipage}[c]{.3\textwidth}
\centering

Assumptions:
\begin{verbatim}
%Involutory quandle axioms
x * x = x.
(x * y) * y = x.
(x * z) * (y * z) = (x * y) * z.
%Culprit unknot
a1 = a9 * a7.
a3 = a1 * a2.
a2 = a3 * a4.
a5 = a2 * a10.
a6 = a5 * a4.
a7 = a6 * a1.
a8 = a7 * a4.
a10 = a8 * a9.
a4 = a10 * a3.
a9 = a4 * a8.
\end{verbatim}
\end{minipage}
\begin{minipage}[r]{.3\textwidth}
\centering
\vspace{5em}
Goals:
\begin{verbatim}
(a1 = a2) & (a2 = a3) &
(a3 = a4) & (a4 = a5) &
(a5 = a6) & (a6 = a7) &
(a7 = a8) & (a8 = a9) &
(a9 = a10).
\end{verbatim}
\end{minipage}

\end{figure}

%


\subsubsection{Haken's Gordian unknot} diagram has 141 crossings, and is the one of the most well-known concrete, hard-to-detect, unknots; see Figure~\ref{fig:haken}. Prover9 produces the proof of the formula of the form $E_{iq}(D) \rightarrow  \wedge_{i=1 \ldots
n-1} (a_{i} = a_{i+1})$ for this diagram in just under 15 seconds, demonstrating that indeed it is the unknot. The input, and the proof produced by the prover, can be found in~\cite{LogKnot}.

The only alternative approach capable of detecting unknotedness of Haken's Gordian Unknot in practice, that we are aware of, is Dynnikov's
  algorithm based on \emph{monotone simplifications}~\cite{Dyn,Dyn1,KnotSimplifier}.

We have experimented also with the detection of other well-known  hard unknots, such as Goerlitz unknot,
Thistlethwaite unknot, Friedman's  Twisted unknot. In all of these cases Prover9 was able to establish unknotedness
in under a second. See further details in \cite{LogKnot}. This can be compared with the times ``in only a few seconds'' required to detect unknotedness
of these instances by the Heegaard tool, reported by the author in~\cite{simplification}.


\begin{figure}[t!]
\begin{center}
\includegraphics[scale=0.25]{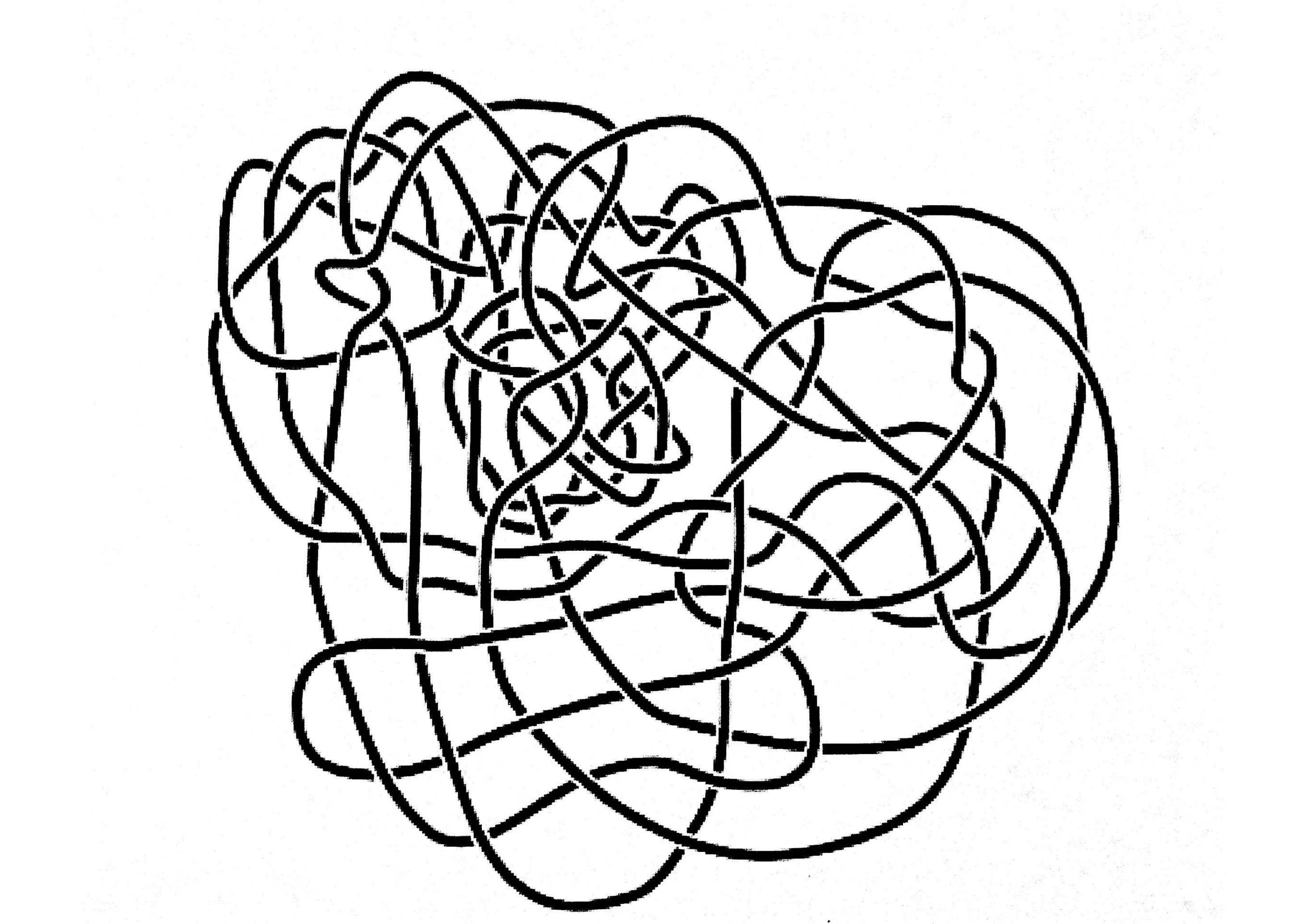}
\caption{Haken's Gordian Unknot}
\end{center}
\label{fig:haken}

\end{figure}


\section{Experiments: detecting non-trivial knots}
We first provide an example of the output from the counter-model finder for a non-trivial knot. Next, we present a table containing the time taken for each of the prime knots in the knot tables, for up to 10 crossings.
We will compare the output sizes with the known invariant of knots, called the determinant of the knot, later.


\subsubsection{Trefoil} (Figure 1 a) is the simplest non-trivial knot; the countermodel found is:

\begin{verbatim}
interpretation( 3, [number=1, seconds=0], [
        function(a1, [ 0 ]),
        function(a2, [ 1 ]),
        function(a3, [ 2 ]),
        function(*(_,_), [
			   0, 2, 1,
			   2, 1, 0,
			   1, 0, 2 ])
]).
\end{verbatim}

The table of prime knots, the size of the minimal countermodel found, and the time taken to find,
%
is given
in the Appendix.

\subsubsection{Comparisons}


 For the collection of prime non-trivial knots of up to 12 crossings from \cite{knotinfo}, the system in~\cite{BenOZ12}, based on linear programming in conjunction with normal surface theory, claims to massively improve on previous approaches, claiming to solve all cases efficiently (in under 5 minutes). Their time-based data is subsequent to highly optimised polynomial time pre-processing simplifications and uses an encoding based on triangulations of the complement of the knot. Bearing in mind the differences in encodings we have compared the performance of the Regina unknot detection algorithm with our approach using Mace4 on the knots with 10 crossings.  For the five special cases  $10_{83},10_{91},10_{92},10_{117},10_{119}$  for which our approach does not terminate in a reasonable time, whilst Regina completed the work in under 3 minutes per knot. 
For remaining   cases the average time was  47s  for Regina, and  1230s for our approach. In general our approach demonstrates much higher discrepancy in timing data: it was very efficient for the cases when small countermodels were found. For countermodels sizes up to 15-17 the 
detection time is under a second -- that holds in  more than $70 \%$ of instances, where our approach  outperforms  Regina's algorithm. In a few cases with large countermodels (e.g $10_{88},10_{94},10_{115}$) 
it takes 40000-80000s to complete the search. Further comparisons  on the large sets of knots is a subject of ongoing and future work.

\section{Countermodels and knot invariants}

Any countermodel found is a finite quandle, $C$, which is a homomorphic image of $IQ(K)$. Thus it a homomorphic image of $Q(K)$ which factors through $IQ(K)$. Such finite quandles may be constructed via ``involutising" quandles which are homomorphic images of $Q(K)$. The countermodel search process finds the smallest such finite quandle $C$, and we see that these do not all arise via the same known quandles. We present results which demonstrate that the majority, but not all, of the small alternating knots (of size up to 10) arise as quotients of the dihedral quandle. The question of how the other minimal size finite quandles arise is still open, but this demonstrates that the methodology is particularly interesting in that it is discovering the smallest size quandle invariant, over all such invariants, for each case, as opposed to computing each invariant in turn, as per the common approach using invariants.


\begin{definition}[dihedral quandle]
Let $R_n$ be the set of reflections in the dihedral group $D_{2n}$ of order $2n$ (which one can regard as the symmetry group of the regular $n$-gon). Then $R_n$ forms a quandle of order $n$,
called the dihedral quandle of order $n$.
\end{definition}

\begin{proposition}
For any knot $K$, with determinant not equal to ${\pm 1}$, $R_p$ is a finite non-trivial involutory quandle which is the image of the fundamental quandle of the knot, where $p$ is smallest prime divisor of the determinant of the knot.
\end{proposition}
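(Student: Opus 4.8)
\medskip
\noindent\textbf{Proof idea.} The plan is to realise $R_p$ in coordinates, identify quandle homomorphisms out of $Q(K)$ with Fox $p$-colourings of a diagram, and then reduce the existence of a non-trivial one to the classical linear algebra of the knot determinant. First I would fix the standard coordinate model of the dihedral quandle: identifying the reflection $r^{k}s$ of $D_{2n}$ with $k \in \mathbb{Z}/n\mathbb{Z}$, the conjugation product $x \triangleright y = y^{-1}xy$ (where $y^{-1}=y$ for a reflection) becomes $x \triangleright y = 2y - x$ in $\mathbb{Z}/n\mathbb{Z}$. A direct check then gives $Q1$ (since $2x-x=x$), $Q2'$ (since $2y-(2y-x)=x$) and $Q3$, so $R_n$ is an involutory quandle; it is finite, and non-trivial as soon as $n \ge 2$, in particular for $n=p$.

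Next I would translate homomorphisms into colourings. Because $R_p$ is involutory, a homomorphism $\phi\colon Q(K)\to R_p$ carries the same data as a homomorphism $IQ(K)\to R_p$, and each crossing relation $a \triangleright b = c$ (with $a,c$ underarcs and $b$ overarc) is sent to the linear relation $\phi(a)+\phi(c) = 2\phi(b)$ in $\mathbb{Z}/p\mathbb{Z}$. Hence homomorphisms $Q(K)\to R_p$ are exactly the \emph{Fox $p$-colourings} of a diagram $D$: labellings of the arcs by $\mathbb{Z}/p\mathbb{Z}$ satisfying $a+c=2b$ at every crossing, the $p$ constant labellings corresponding to the trivial (constant) homomorphisms. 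So it suffices to produce a \emph{non-constant} $p$-colouring and then argue it is surjective.

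The crux is the determinant step. Writing the crossing relations $a+c=2b$ as a square integer coefficient matrix $M$ (rows indexed by crossings, columns by arcs), the all-ones vector spans the constant colourings inside $\ker M$; deleting one redundant row and one column produces the reduced colouring matrix, a known presentation matrix of $H_1$ of the double branched cover $\Sigma_2(K)$, whose determinant is $\pm\det(K)$. Consequently the colouring system admits a solution outside the constants over $\mathbb{Z}/p\mathbb{Z}$ iff this reduced matrix is singular modulo $p$, iff $p\mid\det(K)$. Since $\det(K)\ne\pm 1$ it has a prime divisor $p$ (and the argument is uniform in $p$, so the smallest prime divisor yields the smallest such dihedral quandle), giving a non-constant colouring $\phi$. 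Its image is a subquandle of $R_p$ containing two distinct values $a\ne b$; as $\{a+k(b-a):k\in\mathbb{Z}/p\mathbb{Z}\}$ is closed under $\triangleright$ and equals all of $\mathbb{Z}/p\mathbb{Z}$ (because $b-a$ generates the group for $p$ prime), $\phi$ is surjective. Hence $R_p$ is a finite non-trivial involutory quandle that is a homomorphic image of $Q(K)$.

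I expect the determinant step to be the only genuine obstacle: the colouring/homomorphism dictionary and the surjectivity argument are elementary, whereas the equivalence ``non-constant $p$-colouring $\iff p\mid\det(K)$'' rests on identifying the reduced coefficient matrix with a presentation of $H_1(\Sigma_2(K))$ (equivalently, on the Goeritz/Alexander description of $\det K$). A fully self-contained treatment would have to establish there that every first minor of $M$ has absolute value $\det(K)$ and relate its vanishing modulo $p$ to the existence of a non-constant solution; I would prefer to invoke this as a classical fact rather than reprove it.
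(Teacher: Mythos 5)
Your proof is correct and follows essentially the same route as the paper: both reduce homomorphisms $Q(K)\to R_p$ to Fox $p$-colourings of a diagram and invoke the classical fact that a non-constant $p$-colouring exists iff $p \mid \det(K)$. Your additional details (explicit verification of the quandle axioms in the $x \triangleright y = 2y-x$ model, and the surjectivity argument via the subquandle generated by two distinct colours) only flesh out steps the paper treats as immediate.
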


\begin{proof}
For more details on racks and quandles, and this construction, see~\cite{FR92}. A homomorphic image of the fundamental quandle of the knot $K$ into $R_n$ may be given by colouring the arcs of any diagram of $K$ with $n$ colours $0,1,\ldots, n-1$ such that at each crossing if $x_{a}$, $x_b$, $x_c$ are the three colours assigned to the arcs labelled $a$, $b$, $c$, with $b$ the overarc, then $x_c \equiv 2 x_b - x_a$ mod $n$. If $n$ is prime, then it is well known that these equations have a non-constant solution if and only if $n$ divides $det(K)$ the \emph{determinant} of $K$ (obtainable as the evaluation of the Alexander polynomial at $t=-1$, sometimes denoted $\Delta(-1)$). In general, a representation into any finite quandle can be interpreted as a suitable colouring scheme for the diagram. Since the elements of the dihedral quandle are all reflections, the quandle is involutory by definition.
\end{proof}

The experimental computation, together with comparison of the determinant of the knot gives us the following, where as usual, the numbering convention is that generally adopted for prime knots in the knot tables; see~\cite{knotinfo} for example.


\begin{proposition}
Out of the 251 prime, alternating knots of up to 10 crossings, from the knot tables, a smallest non-trivial involutory quandle which is a homomorphic image of the fundamental quandle of the knot is: of size 15 for 22 knots: $9_{22}, 9_{25}, 9_{30}, 9_{36}, 9_{44}, 9_{45}, 10_{46}, 10_{47}, 10_{49}, 10_{70}, 10_{72}, 10_{73}, 10_{79}, 10_{80}, 10_{93}, 10_{102}, \newline 10_{124}, 10_{126}, 10_{127}, 10_{148},10_{149},10_{153}$; of size 28 for 11 knots: $10_{50},10_{51},10_{52},10_{53}, \newline 10_{54},10_{55},10_{57}, 10_{131}, 10_{135}, 10_{150}, 10_{151}$; of size 31 for 1 knot: $10_{115}$; of size 32 for 1 knot: $10_{118}$; of size 36 for 1 knot: $10_{110}$; of size equal to the smallest prime divisor of the determinant of the knot for the remaining $213$ knots.
\end{proposition}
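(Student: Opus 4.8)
The plan is to treat this as a finite, computer-certified classification rather than a single closed-form argument, since the assertion records, knot by knot, an exact minimal invariant. First I would fix, for each listed knot $K$, a concrete diagram $D$ from the knot tables \cite{knotinfo}, form its involutory quandle presentation $IQ(D)=\langle G_D\mid R_D\rangle$, and build the formula $\Phi_D := E_{iq}(D)\wedge\neg\bigwedge_{i=1}^{n-1}(a_i=a_{i+1})$. By Proposition~\ref{prop:iq-detection} and Birkhoff completeness \cite{Bir35}, a non-trivial involutory quandle of cardinality $s$ that is a homomorphic image of $IQ(K)$ is precisely a model of $\Phi_D$ on a domain of size $s$; hence the ``smallest non-trivial involutory quotient'' is exactly the least domain size on which $\Phi_D$ is satisfiable. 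Running Mace4 \cite{McCune}, which enumerates domains of increasing cardinality, returns this minimal $s$ for each knot, and that $s$ is the number the Proposition reports.

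Next I would record the uniform upper bound. Every knot here has $|\det(K)|>1$ (read off from \cite{knotinfo}), so the preceding Proposition on the dihedral quandle supplies a non-trivial involutory homomorphic image $R_p$ of size $p$, the smallest prime divisor of $\det(K)$; thus for every knot the minimal quotient size is at most $p$. This immediately gives the \emph{candidate} value $p$, and the remaining work is the matching lower bound together with the explicit exceptional quandles.

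The lower bounds are the substantive part. To certify that the minimal size equals a given $s$, one must show $\Phi_D$ has no model on any domain of cardinality $2,\ldots,s-1$; this is a finite decidable check that Mace4 performs by exhausting each smaller cardinality. For $213$ of the knots this certifies $s=p$, so the minimal quotient is the dihedral quandle $R_p$ and the invariant is genuinely governed by the determinant. For the remaining knots Mace4 instead returns a least model of size $15,28,31,32$ or $36$; since $R_p$ is still a quotient, the reported size is at most $p$, and being different from $p$ it must be strictly smaller, so the minimal invariant is realised by a \emph{non-dihedral} quandle that strictly improves the determinant bound. Assembling the per-knot outputs, cross-checked against the tabulated determinants, then yields the stated partition (the enumeration $22+11+1+1+1=36$ together with $213$ should finally be reconciled against the claimed total of $251$).

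The hard part will be the completeness and cost of the exhaustive minimality search: certifying the \emph{absence} of any model on every smaller domain is far more expensive than exhibiting one, and it blows up precisely for the knots whose minimal quotient is large. In particular the five knots $10_{83},10_{91},10_{92},10_{117},10_{119}$, on which the model finder does not terminate in reasonable time, must be handled by an independent lower bound or set aside. A secondary but genuine obstacle is that the argument rests on trusting the search to be exhaustive at each cardinality; the certification is therefore only as strong as that guarantee, ideally corroborated by reconstructing each returned operation table and re-verifying $E_{iq}(D)$ and non-triviality directly, and by replaying the unsatisfiability certificates for the smaller domains.
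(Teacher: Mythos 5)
Your proposal matches the paper's approach exactly: the paper establishes this proposition purely by experimental computation --- Mace4's incremental-domain search returns, for each knot, the least cardinality admitting a model of $E_{iq}(D)\wedge\neg\bigwedge_i(a_i=a_{i+1})$ (tabulated in the Appendix), and this is combined with the preceding dihedral-quandle proposition to compare each minimal size against the smallest prime divisor of the determinant. Your closing caveats are in fact genuine issues with the paper's own statement rather than gaps in your argument: the table leaves $10_{83},10_{91},10_{92},10_{117},10_{119}$ with only lower bounds, and the breakdown $22+11+1+1+1+213=249$ does not match the claimed total of $251$.
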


\begin{corollary}
For 213 of the 251 prime, alternating knots of up to 10 crossings, there is no smaller non-trivial involutory quandle which is a homomorphic image of the fundamental quandle of the knot than the dihedral rack on $p$ elements, where $p$ is smallest prime divisor of the determinant of the knot. For the remaining 38 knots, there is a smaller such non-trivial involutory quandle.
\end{corollary}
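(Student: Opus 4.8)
The plan is to derive the Corollary directly from the two preceding Propositions, treating it essentially as a reinterpretation of the explicit size data in terms of the dihedral quandle $R_p$. The only external input needed is that every prime, non-trivial alternating knot has determinant of absolute value greater than $1$; this is standard for alternating knots, so the dihedral Proposition applies to all $251$ knots in the table. Consequently, for each such knot $K$ the dihedral quandle $R_p$, with $p$ the smallest prime divisor of $\det(K)$, is a genuine non-trivial involutory homomorphic image of the fundamental quandle $Q(K)$ of size exactly $p$. This furnishes, for every knot, an a priori upper bound: writing $s(K)$ for the minimal size of a non-trivial involutory quandle which is a homomorphic image of $Q(K)$, we have $s(K) \leq p$.

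First I would dispose of the $213$ knots for which the preceding Proposition records $s(K) = p$. Here the minimum is realised by $R_p$ itself, so by definition there is no non-trivial involutory quandle image strictly smaller than the dihedral rack on $p$ elements; this is the first assertion of the Corollary.

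Next I would handle the remaining knots, those for which the Proposition records $s(K) \in \{15, 28, 31, 32, 36\}$. For each such $K$ the dihedral bound gives $s(K) \leq p$, so it suffices to show $s(K) \neq p$, which then forces $s(K) < p$ and exhibits the size-$s(K)$ quandle as a non-trivial involutory image strictly smaller than $R_p$. For the sizes $15, 28, 32, 36$ this is immediate, since $p$ is prime while these values are composite. The one case requiring care is $10_{115}$, where the recorded value $s = 31$ is itself prime. Here the delicate point is that $31 \neq p$ must be argued rather than read off: since every prime $q$ dividing $\det(10_{115})$ would itself yield an image $R_q$ of size $q$, the equality $s(10_{115}) = 31$ forces no prime below $31$ to divide the determinant, so $p \geq 31$; and $p = 31$ would place $10_{115}$ in the size-$=p$ family, contradicting its recorded exceptional status. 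Hence $31 \nmid \det(10_{115})$, so $p > 31$ and $s(10_{115}) < p$ as required.

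I expect the deduction to be routine once the preceding Proposition is granted, so the genuine obstacle lies not in the Corollary but in the computation underlying that Proposition --- certifying, for each of the $251$ knots, that the finite-model search has located a \emph{minimal} non-trivial involutory quandle image, and comparing it against the determinant. The only point of logical delicacy to keep visible in the proof is the prime exceptional size $31$ for $10_{115}$, where the conclusion ``there is a strictly smaller quandle than $R_p$'' cannot be obtained from compositeness and must instead rest on the divisibility fact $31 \nmid \det(10_{115})$ forced by the classification.
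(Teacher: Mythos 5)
Your overall strategy is exactly the paper's (implicit) one: the paper offers no separate argument for the Corollary, which is read off from the experimental Proposition together with the dihedral-quandle Proposition, precisely as you do. Your explicit treatment of $10_{115}$ is also the right way to handle the only exceptional size that happens to be prime, and your reading of the Proposition's partition as exclusive (the $213$ knots being exactly those with minimal size equal to $p$) is the intended one.

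There is, however, one step that fails. Your sole external input --- that all knots in the table are alternating, hence have determinant of absolute value greater than $1$, hence admit $R_p$ as an image --- is false for the actual knot set. Despite the word ``alternating'' in the statement, the collection manifestly includes the non-alternating knots of the tables: the size-$15$ list contains $10_{124}, 10_{126}, 10_{127}, 10_{148}, 10_{149}, 10_{153}$ and the size-$28$ list contains $10_{131}, 10_{135}, 10_{150}, 10_{151}$, all of which lie in the non-alternating range $10_{124}$--$10_{165}$. In particular, the paper itself records in its Discussion section that $10_{124}$ and $10_{153}$ have determinant $\pm 1$. For these two knots, which belong to the $38$ ``exceptional'' knots you must handle, the determinant has no prime divisor at all: $p$ does not exist, $R_p$ is undefined, and your chain $s(K) \leq p$, $s(K) \neq p$, hence $s(K) < p$ has no content. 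So your proof establishes the second assertion of the Corollary only for the exceptional knots with nonunit determinant; for $10_{124}$ and $10_{153}$ the assertion is ill-posed (a defect of the statement itself, which a proof should confront or exclude explicitly rather than paper over with a false premise). A minor further point you inherit silently from the paper: the exceptional lists total $36$ knots, not the $38$ claimed ($251-213$), another internal inconsistency that a careful derivation ought to flag.
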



\subsection{Discussion: Countermodels and Small Quandles}

Winker~\cite{Winker} remarks that for a certain class of knots (technically, those which are the closure of 4-strand braids), the involutory quandle $IQ(K)$ is finite and has order equal to the knot determinant $|K|$ (or $det(K)$), citing~\cite{Joyce1982a}, and that since every prime knot of 7 or fewer crossings is 4-strand it has finite involutory quandle. On the other hand, $IQ(8_{16})$ and $IQ(9_{35})$ are infinite, as are the $(k,m,n)$-pretzel knots (knots obtainable by a construction involving a certain process of $k$ twists, $m$ twists and $n$ twists) when $1/k+1/m+1/n \leq 1$; knots $8_{5} = K_{2,3,3}$ and  $9_{35} = K_{3,3,3}$ are examples of such pretzel knots. 
There does not appear to be an existing complete classification of involutory quandles. We observe that for any knot $K$ which has a finite involutory quandle $IQ(K)$, this involutory quandle itself would be a countermodel, as would the projection onto the quandle arising from each of the colouring numbers. Whilst we we may find the homomophic image of the quandle with size the smallest prime divisor of $det(K)$, corresponding to the smallest colouring number, this does not, a priori, rule out smaller homomorphic images of involutory quandles that arise in other ways. Furthermore, there are some prime knots $K$ with $det(K)= \pm 1$. The knots $10_{124}$ and $10_{153}$ are the only such prime alternating knots with up to 10 crossings. For both of these knots we find a smallest homomorphic image of involutory quandle of size 15. In~\cite{FR92}, they observe that the representation in a reflection rack, whose elements are the edges of a dodecahedron, can be used to distinguish knots which have determinant ${\pm 1}$ and so have no non trivial representation to $R_n$, giving $10_{124}$ an an example. Similarly, in~\cite{Joyce1982a}, they present the knot $10_{124}$ with determinant 1, indicating that $AbQ_2(K)$ is trivial (this is the abelian, involutory quandle on $K$, where abelian means that $(w \triangleright x) \triangleright (y \triangleright z)= (w \triangleright y) \triangleright (x \triangleright z)$).  But the involutory quandle $IQ(K)=Q_2(K)$ is non-trivial and has order 30, and may be faithfully represented on a sphere as the 30 midpoints of the edges of a dodecahedron projected onto the sphere.

In~\cite{computable}, they use the Library for Automated Deduction Research in the endeavour of identifying isomorphism classes of small quandles, and they present several families of quandles; as well as considering the dihedral quandle, they also refer to linear quandles, the Alexander quandles, and transposition quandles. These are candidate classes to consider in the identification of the countermodels. For instance, the transposition quandle, $T_n$, has size $n(n-1)/2$, and so these are candidates to explain our countermodels arising at sizes 15, 28 and 36. The identification and understanding of exactly which quandle families have smallest homomorphic image is an intriguing open problem to be explored in future work, with guidance from the countermodel finder approach adopted. Furthermore, Proposition 11.2 of~\cite{Joyce1982a} says that every involutory quandle is representable as an involutory quandle with geodesics, and so the construction of the smallest such involutory quandle with geodesics for a knot $K$ will, in fact, correspond to our search for minimal countermodel.

Winker~\cite{Winker} states that the involutory quandle of a knot or link is either finite or ``not too infinite", and gives examples to show that knots can have different knot groups but the same involutory quandle (e.g. the Figure of Eight knot and the $(5,2)$ torus knot), and that the involutory quandle of a particular prime link (the Borromean rings) is infinite.


\section{Equational Reasoning and Untangling Unknots}~\label{sec:untangle}

Recall Proposition~\ref{prop:iq-detection}: a knot diagram $D$ is a diagram of the unknot if and only if
$E_{iq}(D) \vdash \wedge_{i=1 \ldots n-1} (a_{i} = a_{i+1})$, where $\vdash$ denotes derivability in the equational logic (or, equivalently in the first-order logic with equality). We adopt the abbreviation $TRIV \equiv \wedge_{i=1 \ldots
n-1} (a_{i} = a_{i+1})$ for the generators $a_{1}, \ldots, a_{n}$. Then the condition above will be rewritten as
$E_{iq} \vdash ? TRIV$.
The axioms of involutory quandles can be seen as  algebraic counterparts of the Reidemeister moves:

\begin{enumerate}
\item $x \triangleright x = x$ for all $x \in Q$     ($\sim RM_{1}$)
\item $(x \triangleright y) \triangleright y = x$ for all $x,y \in Q$   ($\sim RM_{2}$)
\item $(x \triangleright z) \triangleright (y \triangleright z) = (x \triangleright y) \triangleright z$  for all $x,y,z \in Q$ ($\sim RM_{3}$)
\end{enumerate}

For $I \subseteq \{1,2,3\}$, denote by $E_{iq}^{I}$ an equational theory formed by the corresponding subset of the axioms $1-3$ given above. In particular $E_{iq}^{\{1,2,3\}} = E_{iq}$. 
Reidemeister's theorem~\cite{Reid} says that a diagram $D$ is a diagram of an unknot if and only if $D$ can be transformed to a trivial diagram
%
$D_{U}$
by a finite sequence of Reidemeister moves. Denote by $D \rightarrow^{I} D'$ the fact that $D$ can be transformed to $D'$ using the Reidemeister moves drawn only from $I$.
In this section we explore possible connections between equational proofs and Reidemeister transformations. The following proposition expresses the fact that the equational proof can simulate simplifications by Reidemeister moves.

\begin{proposition}~\label{prop:simulation}
For any non-empty $I \subseteq \{1,2,3\}$, if $D \rightarrow^{I} D^{U}$ then $E_{iq}^{I} \vdash TRIV$.
Furthermore an equational proof can be constructively built by a simple procedure from the untangling sequence of Reidemeister moves.
\end{proposition}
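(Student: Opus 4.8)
The plan is to induct on the length of the untangling sequence $D = D_0 \rightarrow^{I} D_1 \rightarrow^{I} \cdots \rightarrow^{I} D_m = D^U$, showing that each single Reidemeister move can be simulated by a short equational derivation that uses only the axiom of $E_{iq}^{I}$ matching the move's type, and that these derivations compose to collapse all generators of $D$ to a single one, which is exactly $TRIV$. The base case is immediate: the trivial diagram $D^U$ has one arc and no crossing relations, so $TRIV$ is the empty conjunction and holds vacuously.

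For the inductive step I would establish a local ``move lemma'' for each move type, relating the generators and crossing relations of consecutive diagrams. First I fix the correspondence between arcs of $D_j$ and $D_{j+1}$: all arcs outside the local modification persist, and only the few arcs created or destroyed inside the changed region need tracking. For \emph{RM1} (axiom Q1): a kink contributes a crossing whose relation has the form $x \triangleright x = y$, so Q1 gives $x \triangleright x = x$ and hence $y = x$, identifying the two arcs split by the kink. For \emph{RM2} (axiom Q2'): a poke contributes two crossings with relations $a \triangleright b = a_1$ and $a_1 \triangleright b = a_2$, and Q2' yields $a_2 = (a \triangleright b) \triangleright b = a$, identifying the arcs rejoined when the poke is removed. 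For \emph{RM3} (axiom Q3): the triangle move fixes the crossing count and relabels the middle arc, and Q3 is precisely the identity equating $(x \triangleright z) \triangleright (y \triangleright z)$ with $(x \triangleright y) \triangleright z$, so it certifies that the relations before and after the move derive one another.

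With these in hand I would argue that whenever $D_j \rightarrow^{\{t\}} D_{j+1}$ with $t \in I$, the relations $R_{D_{j+1}}$ (read through the arc correspondence) are derivable from $R_{D_j}$ together with axiom $t$, and conversely; this is a Tietze-style equivalence of the two presentations realised using only the axiom indexed by $t$. Composing over the whole sequence, every generator of $D$ is provably equal, using only axioms with index in $I$, to the single generator of $D^U$, which gives $E_{iq}^{I} \vdash TRIV$. The ``furthermore'' clause then follows at once, since the derivation is read off move by move from the untangling sequence using one fixed derivation template per move type.

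The step I expect to be the main obstacle is the tight tracking of \emph{which} axioms are actually used. The delicate point is that a move of one nominal type can locally produce a relation of another type's character: for instance, an RM2 poke of a strand across itself yields a relation of the kink form $x \triangleright x = y$, whose resolution to $x = y$ appears to call for Q1 rather than Q2' (and indeed a two-element model of Q2' alone refutes $x = y$ there). Making each per-move simulation genuinely stay within $E_{iq}^{I}$ — or else pinning down the minimal set of axioms that is really forced and restricting the statement accordingly — is where the care is needed; by contrast the remaining bookkeeping, namely arc relabelling across a non-monotone sequence in which moves may increase the crossing number, is routine once the local move lemmas are fixed.
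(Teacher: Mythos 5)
For the unrestricted case $I=\{1,2,3\}$ your plan is essentially the paper's own proof. The paper also proceeds move by move: it maintains, for each intermediate diagram $D_i$, a growing equation set $E(D_i)$ and a labelling of the arcs of $D_i$ by terms over the \emph{original} generators, subject to four invariants (monotonicity, $E(D_i)\vdash E(D_{i+1})$, consistency of labels at every crossing, and the requirement that the label of any disappearing arc be equated in $E(D_{i+1})$ with a surviving one), with per-move update rules in the appendix that are exactly your ``move lemmas'' ($RM_1 \leftrightarrow Q1$, $RM_2 \leftrightarrow Q2'$, $RM_3 \leftrightarrow Q3$); unwinding the sequence backwards yields $TRIV$. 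Your Tietze-style two-way equivalence is slightly different bookkeeping (and in fact handles the elimination of the middle arc in an $RM_2$ reduction more gracefully than the paper's fourth invariant, which as literally stated is problematic for that arc), but it is the same argument, and it is sound for the full axiom set.

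The obstacle you flag in your last paragraph, however, is not a technicality to be finessed by more care: it is a counterexample to the statement for proper subsets $I$, and hence a gap that no proof --- yours or the paper's --- can close. Let $D$ be the $2$-crossing unknot diagram obtained from the round circle by poking an arc under itself, so that $D\rightarrow^{\{2\}}D^U$ by a single $RM_2$ move. Its two solid arcs, $m$ (the middle of the poke) and $b$ (the rest), carry the relations $b\triangleright b=m$ and $m\triangleright b=b$. Your two-element structure, $\{0,1\}$ with $x\triangleright y=x+1$ modulo $2$, satisfies $Q2'$ \emph{and also} $Q3$, together with both relations (take $b=0$, $m=1$), yet $m\neq b$; the missing identification $m=b\triangleright b=b$ genuinely requires $Q1$. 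So $E_{iq}^{\{2\}}\cup R_D\not\vdash TRIV$ and even $E_{iq}^{\{2,3\}}\cup R_D\not\vdash TRIV$, although $D$ untangles by $RM_2$ alone: the proposition is false for these $I$. (Similar failures occur elsewhere: a kink whose loop passes \emph{over} gives the relation $a\triangleright b=b$, which $Q1$, or even $Q1$ with $Q3$, cannot resolve --- the projection $x\triangleright y=y$ is a countermodel --- so $I=\{1\}$ and $I=\{1,3\}$ are also broken; indeed the crossing relation itself, $a\triangleright b=c$ versus $c\triangleright b=a$, is only canonical up to $Q2'$, so the restricted statement is not even well posed without fixing conventions.) The paper's proof establishes only the case $I=\{1,2,3\}$ and dismisses proper subsets with the single sentence that ``the proof follows the same route'', which is precisely where it fails; note this also undermines Proposition~\ref{prop:culprit_all_three}, whose proof invokes exactly this direction of the present proposition for $I=\{2,3\},\{1,3\},\{1,2\}$. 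Your closing suggestion --- restrict or reformulate the statement according to the axioms each local configuration actually forces (for instance, excluding moves in which the two strands involved belong to the same arc) --- is therefore the correct resolution, not a fallback.
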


{\bf Proof.} (Sketch) Consider the case of $I = \{1,2,3\}$. Assume that for a diagram $D$ we have $D \rightarrow^{I} D^{U}$. That is, there is a sequence of diagrams $D = D_{1}, \ldots, D_{i}, \ldots, D_{n} = D^{U}$ such that every diagram in the sequence is obtained from the
previous one by a single application of a Reidemeister move. Let $IQ(D) = \langle G_{D} \mid R_{D} \rangle$ be a presentation of the involutary quandle of $D$ with the set of generators $G_{D}$ and set of relators $R_{D}$. Denote by ${\cal T}(D)$ the set of all terms built upon the set of constants, identified with the generators $G_{D}$, together with the involutory quandle operation $\triangleright$ as the only term construct. Denote by $A(D_{i})$ the set of solid arcs of the diagram $D_{i}$. A \emph{labelling} $L$ of a diagram $D_{i}$ is a mapping  $L: A(D_{i}) \rightarrow {\cal T}(D)$.  Now we demonstrate the inductive construction of a sequence of pairs $(E(D_{i}),L_{i})$, associating with each $D_{i}$ a set of equations $E(D_{i})$ and a labelling $L_{i}: A(D_{i}) \rightarrow {\cal T}(D)$ satisfying the following properties:

\begin{enumerate}
\item $E(D_{i}) \subseteq E(D_{i+1})$;
\item$E(D_{i}) \vdash E(D_{i+1})$;
\item $L_{i}$ is consistent w.r.t. involutory quandle labelling rules on solid arcs of $A(D_{i})$, meaning $E(D_{i}) \vdash L_{i}(a) \triangleright L_{i}(b) = L_{i}(c)$ for all $a,b,c \in A(D_{i})$ positioned as shown in Figure 2 (a);
\item If $t \in {\cal T}(D)$ is in $L_i(A(D_{i}))$ but not in $L_{i+1}(A(D_{i+1})$), and $\mid \! A(D_{i})\! \mid>1$ then there exist an $s$  in $L_{i+1}(A(D_{i+1}))$ and $t=s$ in $E(D_{i+1})$.
 \end{enumerate}


The intuition is that the equation set grows, adding statements of equality which are derived from the axioms, according to the progress in the unknotting sequence.  The condition on the size of the arc set in Property 4 is required since one can still apply RM moves to untangle a diagram which has only one arc, but there is no more equational rewriting to perform.

Assume that the above four properties are satisfied. Then, in any untangling sequence of diagrams the last diagram $D^U$ is a trivial diagram of the unknot. Then $D^U$ has just one arc with a label $\tau$, say. Let $D_k$ be the last diagram in the sequence which has this property of having just one arc with label $\tau$. Then, by Property 4, the label $\rho$ of the last removed arc (in the diagram $D_{k-1}$ preceding $D_k$ in the sequence) is provably equal to $\tau$, that is, $E(D_{k-1})\vdash (\tau  = \rho)$, and $(\tau  = \rho) \in E(D_{k-1})$. Unwinding the process backwards (and formally applying induction) we obtain that the labels of all of the removed arcs are provably equal to each other, including all of the generators. Thus, the required $n-1$ pairwise equalities of the $n$ generators are derivable from $E(D^U)$, and the result follows. The details of the construction of the sequence $(E(D_{i}),L_{i})$ can be found in 
the Appendix.  
For any $I$ which is a proper subset of $\{1,2,3\}$ the proof follows the same route using the property that the construction of $(E(D_{i+1}),L_{i+1})$
depends only on $(E(D_{i}),L_{i})$ and a type of RM used to transform $D_{i}$ into $D_{i+1}$.


%
%
%
%


The approach can be used to investigate which of the Reidemeister moves are required in a proof of unknottedness.

\begin{proposition}~\label{prop:culprit_all_three}
Culprit unknot (see above) needs all three Reidemeister moves to untangle.
\end{proposition}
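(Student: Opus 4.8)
The plan is to reduce the claim to three non-derivability statements and then to certify each by exhibiting a finite countermodel. Writing $I_k = \{1,2,3\} \setminus \{k\}$ for $k \in \{1,2,3\}$, the move $RM_k$ is \emph{necessary} precisely when the Culprit diagram $D$ admits no untangling sequence using only the moves in $I_k$, i.e. when $D \not\rightarrow^{I_k} D^U$. Since we already know from Proposition~\ref{prop:simulation} that $D \rightarrow^{I} D^U$ implies $E_{iq}^{I}(D) \vdash TRIV$, where $E_{iq}^{I}(D) = E_{iq}^{I} \cup R_D$ collects the retained quandle axioms together with the crossing relations of $D$, its contrapositive gives: if $E_{iq}^{I_k}(D) \not\vdash TRIV$ then $D \not\rightarrow^{I_k} D^U$. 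Hence it suffices to establish the three non-derivabilities $E_{iq}^{I_k}(D) \not\vdash TRIV$ for $k = 1,2,3$, which together show that dropping any single move leaves the Culprit unsimplifiable.

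First I would record that, because all three moves together do untangle the Culprit (it is the unknot, and Prover9 proves $TRIV$ from the full $E_{iq}(D)$), each of the three failures below genuinely pins the necessity of exactly one move rather than being vacuous. To establish $E_{iq}^{I_k}(D) \not\vdash TRIV$ I would, following the model-finding methodology used throughout the paper, exhibit for each $k$ a finite algebra $M_k = (S_k, \triangleright)$ that satisfies the two retained quandle axioms and all ten crossing relations $R_D$ of the Culprit, yet in which the interpretations of the generators $a_1, \ldots, a_{10}$ are not all equal, so that $TRIV$ fails in $M_k$. By soundness of equational logic, the existence of such an $M_k$ immediately yields $E_{iq}^{I_k}(D) \not\vdash TRIV$.

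Concretely, the three searches differ only in which axiom is dropped: for $k=1$ one drops idempotency $x \triangleright x = x$ and seeks a non-trivial \emph{rack} model of $R_D$ satisfying axioms $2$ and $3$; for $k=2$ one drops the involutory axiom $(x\triangleright y)\triangleright y = x$ and seeks a non-trivial model of $R_D$ satisfying axioms $1$ and $3$; for $k=3$ one drops self-distributivity and seeks a non-trivial model of $R_D$ satisfying axioms $1$ and $2$. Each such model can be produced and verified automatically by Mace4 on the assumption file obtained from the Culprit input by deleting the single dropped axiom and negating $TRIV$; a small finite domain should suffice in each case, and the returned interpretation table provides a self-contained certificate that a reader can check directly against $R_D$.

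The main obstacle I anticipate is not the reduction, which is immediate from Proposition~\ref{prop:simulation}, but the \emph{existence} of the three countermodels: it is conceivable a priori that some two-element subset of axioms already forces triviality of the relaxed theory, in which case no countermodel would exist and that move would fail to be necessary. Ruling this out is exactly what the model finder settles, by producing a witnessing non-trivial algebra in each of the three cases. A secondary point worth verifying is that Proposition~\ref{prop:simulation} is being applied in its stated generality for non-empty proper subsets $I \subseteq \{1,2,3\}$, so that the contrapositive genuinely blocks untangling by the two retained moves, and that each $M_k$ really falsifies $TRIV$ by assigning two specific generators distinct values, since a model satisfying the relations but collapsing all generators would be uninformative.
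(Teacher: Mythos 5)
Your proposal is correct and takes essentially the same approach as the paper: the paper's proof likewise relies on the contrapositive of Proposition~\ref{prop:simulation} and disproves $E_{iq}^{I} \vdash TRIV$ for $I=\{2,3\},\{1,3\},\{1,2\}$ by having Mace4 find finite countermodels automatically. The existence of these countermodels, which you rightly flag as the one point needing confirmation, is settled in the paper, which reports countermodels of sizes 2, 3 and 4 respectively.
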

{\bf Proof} For $I = \{2,3\}, \{1,3\},\{1,2\}$ one can disprove $E_{iq}^{I} \vdash TRIV$ by finding countermodels by Mace4 automatically of sizes 2,3,4, respectively.


~ \newline
An interesting question: is it possible to make a simulation in the opposite direction, that is, to extract an untangling sequences of Reidemeister moves from equational proofs? Although we don't have a definite answer here, in some simple cases one can indeed extract the moves from the proofs.
We leave the development of systematic Reidemester move extraction 
procedures for future work.

\section{Conclusion}

We presented the basis for a new method for unknot detection, based on parallel application of theorem prover and (counter)-model finder. It appears interesting, in that it has different abilities to existing approaches. In particular, the countermodel finder is producing the smallest non-trivial homomorphic image of the involutory quandle of the knot; thus it is, in some sense, finding the smallest invariant which distinguishes it from the unknot. Furthermore, the approach lends itself to new avenues of research, such as the exploration of the correlations between the equational proofs provided by the theorem provers and the corresponding sequences of labelled diagrams in an unknotting sequence.
Thus, whilst we have provided some interesting examples of unknot detection, exploring the whole spectrum of unknots and the relative difficulty of their detection, in comparison with other methodologies, will be an interesting avenue to explore. Furthermore, developing any correspondences between quandle-labelled diagram transformations and the unknotting proofs produced may provide interesting insights, potentially leading to more advanced tailored reasoning strategies.

Another direction is to explore automated deduction approach using different knot invariants such as knot groups and (non-involutory) quandles. In terms  of groups,
unknotedness corresponds to \emph{commutativity}
%
%
deciding  which can also be reduced to the equational theorem (dis)proving.  We have early indications that using involutory quandles, as explored in this paper, might be more efficient than using knot groups; for example, for a torus knot $T_{3,5}$ disproving using involutory quandles took less than a second and produced a countermodel of size 15, whilst disproving commutativity of the group of $T_{3,5}$ did not finish in 500 seconds and no countermodels of size less than $120$ were found.


%

\appendix







\newpage
\section*{Appendix}
%


The table below shows the experimental data obtained so far for the knots in the standard knot tables of at most 10 crossings. For each case, we present the size of the minimal countermodel to unknottedness which is found, together with the time taken to find this countermodel. In the majority of these cases the size of the countermodel is the smallest prime divisor of the determinant of the knot. In a few cases, the process was manually terminated 
after a certain time had elapsed, as indicated. 

\begin{table}
\caption{Experimental data for the time taken to find the minimal countermodels for the knots in the standard knot tables of at most 10 crossings.}
\scalebox{0.6}
{

 \begin{tabular}{|c|c|c|c|c|c|c|c|c|c|c|c|c|c|c|c|c|c|}

\hline
$Knot$   &  $3_{1} $  &  $4_{1}$  & $5_{1}$  & $ 5_{2}$  & $ 6_{1}$  &  $ 6_{2}$  &  $6_{3}$  &  $7_{1}$ & $7_{2}$ & $7_{3}$ & $7_{4}$ & $7_{5}$ &
$7_{6}$ & $7_{7}$ & $8_{1}$ & $8_{2}$ &

\\


$Size$     &  3  &        5      &  5        &  7  & 3 &  11  & 13 & 7 & 11 & 13 & 3 & 17 & 19 & 3  &
13 &
  17 &

 \\

$Time$   &  0.0  & 0.01 & 0.01 & 0.0 & 0.0 & 0.01 & 0.01 &

  0.01 &
  0.03 &
  0.03 &
  0. 0 &
  0.08 &
  0.16 &
  0.01 &
 0.06 &
  0.11 &

\\ \hline \hline


%

$Knot$ &
 $8_{3}$ &
 $8_{4}$ &
 $8_{5}$  &
  $8_{6}$  &
  $8_{7}$  &
  $8_{8}$ &
  $8_{9}$ &
  $8_{10}$ &
$8_{11}$ &
$8_{12}$ &
$8_{13}$ &
$8_{14}$ &
$8_{15}$ &
$8_{16}$ &
$8_{17}$ &
$8_{18}$ &
  \\


$Size$ &
  17 &
 19 &
 3 &
 23 &
 23 &
 5 &
 5 &
 3 &
 3  &
 29 &
 29 &
31 &
 3 &
5 &
37 &
3 &
  \\


$Time$ &
   0.12 &
   0.17 &
  0.01 &
   0.41 &
    0.47 &
     0.01 &
    0.01 &
   0.01 &
 0.01 &
1.31 &
2.11 &
2.06 &
0.01 &
 0.01 &
37.66 &
0.03 &
 \\




\hline \hline
$Knot$ &
$8_{19}$  &
$8_{20}$ &
$8_{21}$ &
$9_{1}$ &
$9_{2}$ &
$9_{3}$ &
$9_{4}$  &
$9_{5}$ &
$9_{6}$ &
$9_{7}$ &
$9_{8}$ &
$9_{9}$ &
$9_{10}$ &
$9_{11}$  & 	
$9_{12}$ &
$9_{13}$ &

\\ 

$Size$ &
3 &
3 &
3 &
3  &
3 &
19 &
 3 &
23 &
3 &
29 &
31 &
31 &
3 &
3 &
5 &
 37 &

\\


$Time$ &
 0.00 &
0.01  &
 0.01 &
 0.00 &
 0.00 &
 0.12 &
0.00 &
 0.45 &
 0.00 &
1.53 &
 3.77 &
1.50 &
 0.00 &
 0.01 &
0.01 &
3.77 &

\\ \hline  \hline
$Knot$ &
$9_{14}$ &
$9_{15}$ &
$9_{16}$ &
$9_{17}$ & 	
$9_{18}$ &
$9_{19}$ &
$9_{20}$ &
$9_{21}$ &
$9_{22}$ &
$9_{23}$ &
$9_{24}$ &
$9_{25}$ &
$9_{26}$ &
$9_{27}$ &	
$9_{28}$ &
$9_{29}$ &

\\


$Size$ &
37 &
3 &
3 &
3 &
41 &
41 &
 41 &
 43 &
 15 &
 3 &
 3 &
 15 &
 47 &
 7 &
 3 &
 3 &

 \\
$Time$ &
4.28 &
0.00 &
0.01 &
0.00 &
5.97 &
12.0 &
122.73 &
20.83 &
0.11  &
0.00  &
0.00  &
0.11  &
13.20  &
0.00  &
0.03 &
0.00  &

\\ \hline


$Knot$ &
$9_{30}$ &
$9_{31}$ &
$9_{32}$ &
$9_{33}$ &
$9_{34}$ &
$9_{35}$ &
$9_{36}$ &
$9_{37}$ &
$9_{38}$  &
$9_{39}$ &
$9_{40}$ &
$9_{41}$ &
$9_{42}$  &
$9_{43}$  &
$9_{44}$ &
$9_{45}$ &

\\


$Size$ &
15 &
5 &
59 &
61  &
3 &
3 &
15 &
3 &
3 &
5 &
3 &
7 &
7  &
13  &
15 &
15 &
\\


$Time$ &
0.09 &
0.00 &
1498.44 &
670 &
0.00 &
0.00 &
0.11 &
0.03 &
0.05 &
0.01 &
0.01 &
0.01 &
0.01 &
0.06 &
0.08 &
0.08 &
\\ \hline \hline

$Knot$ &
$9_{46}$  &
$9_{47}$ &
$9_{48}$ &
$9_{49}$  &
$10_{1}$ &
$10_{2}$ &
$10_{3}$ &
$10_{4}$  &
$10_{5}$  &
$10_{6}$  &
$10_{7}$  &
$10_{8}$  &
$10_{9}$  &
$10_{10}$ &
$10_{11}$ &
$10_{12}$ &

\\


$Size$ &
3 &
3 &
3 &
5 &
17 &
23  &
5 &
3 &
3 &
37 &
43 &
29 &
3 &
3 &
43 &
47 &

\\


$Time$ &
0.03 &
0.01 &
0.03 &
0.05 &
0.09 &
0.72 &
0.03 &
0.01 &
0.03 &
8.38 &
16.94 &

22.52 &
0.01 &
0.01 &
554  &
21.50 &

\\ \hline \hline
$Knot$ &
$10_{13}$ &	
$10_{14}$ &
$10_{15}$	&
$10_{16}$	&
$10_{17}$ 	&
$10_{18}$	&
$10_{19}$ 	&
$10_{20}$ 	&
$10_{21}$	&
$10_{22}$ 	&
$10_{23}$ 	&
$10_{24}$	&
$10_{25}$	&
$10_{26}$	&
$10_{27}$	&
$10_{28}$	&

\\

$Size$ &
53 &
3 &
43 &
47 &
41 &
5 &
3 &
5 &
3 &
7 &
59 &
5 &
5 &
61 &
71 &
53 &

\\

$Time$ &
90.53 &
0.03 &
16.22 &
11.84 &
18.14 &
0.01 &
0.03 &
0.00  &
0.00 &
0.01  &
144.42 &
0.01 &
0.00 &
1000 &
153 &
73.41 &

\\ \hline \hline

$Knot$ &
$10_{29}$ & 	
$10_{30}$ & 	
$10_{31}$ & 	
$10_{32}$ &	
$10_{33}$ & 	
$10_{34}$ & 	
$10_{35}$ & 	
$10_{36}$ & 	
$10_{37}$ & 	
$10_{38}$ & 	
$10_{39}$ & 	
$10_{40}$ & 	
$10_{41}$ & 	
$10_{42}$ & 	
$10_{43}$ & 	
$10_{44}$ & 	

\\

$Size$ &
3 &
67 &
3 &
3 &
5 &
37 &
7 &
3 &
53 &
59 &
61 &
3 &
71 &
3 &
73 &
79 &
\\


$Time$ &
0.01 &
298.11 &
0.03 &
0.00 &
0.03 &
11.38 &
0.03 &
0.03 &
36.98 &
157.08 &
76.25 &
0.00 &
554.25 &
0.01 &
728.36 &
968 &

\\ \hline


$Knots$ &
$10_{45}$ 	&
$10_{46}$	&
$10_{47}$	&
$10_{48}$	&
$10_{49}$	&
$10_{50}$	&
$10_{51}$	&
$10_{52}$	&
$10_{53}$	&
$10_{54}$	&
$10_{55}$ 	&
$10_{56}$	&
$10_{57}$	&
$10_{58}$	&
$10_{59}$	&
$10_{60}$	&

\\


$Size$ &
89 &
15 &
15 &
7 &
15 &
28 &
28 &
28 &
28 &
28 &
28 &
5 &

28 &
5 &
3 &
5 &
\\


$Time$ &
2629.41 &
0.11  &
0.11 &
0.01 &
0.14 &
4.12 &
6.33 &
4.76 &
5.94 &
3.28 &
8.53 &
0.03 &
5.51 &
0.01 &
0.03 &
0.01 &

\\ \hline \hline

$Knot$ &
$10_{61}$	&
$10_{62}$	&
$10_{63}$ 	&
$10_{64}$	&
$10_{65}$	&
$10_{66}$	&
$10_{67}$	&
$10_{68}$	&
$10_{69}$	&
$10_{70}$	&
$10_{71}$	&
$10_{72}$	&
$10_{73}$	&
$10_{74}$	&
$10_{75}$	&
$10_{76}$	&

\\


$Size$ &
3 &
3 &
3 &
3 &
3 &
3 &
3 &
3 &
3 &
15 &
7 &
15 &
15 &
3 &
3 &
3 &

\\


$Time$ &
0.01 &
0.01 &
0.03 &
0.01 &
0.03 &
0.01 &
0.01 &
0.01 &
0.01 &
 0.09 &
0.00 &
 0.12 &
 0.09 &
0.03 &
0.01 &
0.00 &

\\  \hline \hline

$Knots$ &
$10_{77}$	&
$10_{78}$	&
$10_{79}$	 &
$10_{80}$	&
$10_{81}$	&
$10_{82}$	&
${\bf 10_{83}}$	&
$10_{84}$	&
$10_{85}$	&
$10_{86}$	&
$10_{87}$ 	&
$10_{88}$ 	&
$10_{89}$	&
$10_{90}$	&
${\bf 10_{91}}$  &
${\bf 10_{92}}$  &

\\

$Size$ &
3  &
3 &
15 &
15 &
5 &
3 &
${\bf \ge 42}$ &
3 &
3 &
5  &
3 &
101  &
3 &
7 &
$ {\bf \ge 34}$  &
$ {\bf \ge 62} $ &

\\

$Time$ &
0.03 &
0.01 &
0.12 &
0.14 &
0.00 &
0.01 &
${\bf \ge 21170}$ &
0.00 &
0.00 &
 0.00 &
0.03 &
63849  &
0.00 &
0.03 &
$ {\bf\ge  20828}$ &
${\bf \ge 62848}$ &

\\ \hline \hline

$Knot$ &
$10_{93}$ 	&
$10_{94}$	&
$10_{95}$	&
$10_{96}$	&
$10_{97}$ 	&
$10_{98}$	 &
$10_{99}$	&
$10_{100}$	 &
$10_{101}$	&
$10_{102}$	&
$10_{103}$	&
$10_{104}$ 	&
$10_{105}$	&
$10_{106}$	&
$10_{107}$	&
$10_{108}$	&

\\

$Size$ &
15 &
71 &
7 &
3 &
3 &
3 &
3 &
5 &
5 &
15 &
3 &
7 &
7 &
3 &
3 &
3 &

\\

$Time$ &
0.62 &
40879.72 &
0.01 &
0.00  &
 0.01 &
0.01 &
0.01 &
0.01 &
0.01 &
1.01 &
0.00 &
 0.01 &
0.03 &
0.03 &
0.00 &
0.00 &

\\
\hline \hline

$Knot$ &
$10_{111}$	&
$10_{112}$  &
$10_{113}$	&
$10_{114}$	&
$10_{115}$	&
$10_{116}$ 	&
${\bf 10_{117}}$ 	&
$10_{118}$ 	&
${\bf 10_{119}}$	&
$ 10_{120}$ 	&
$10_{121}$ 	 &
$10_{122}$	 &
$10_{123}$ 	&
$10_{124}$ 	&
$10_{125}$ 	&
$10_{126}$	&

\\

$Size$ &
7 &
3 &
3  &
3  &
31 &
5 &
${\bf \ge 37}$  &
32 &
${\bf\ge 41}$  &
3 &
5 &
3 &
11 &
15 &
11 &
15 &

\\

$Time$ &
0.00 &
0.00 &
 0.00 &
0.03 &
 85098 &
0.01 &
${\bf\ge 150410}$ &
28 &
${\bf\ge  261100}$ &
0.00 &
0.00 &
0.03 &
 0.12 &
0.08 &
0.03 &
0.11 &

\\ \hline \hline

$Knot$ &
$10_{127}$	&
$10_{128}$ 	&
$10_{129}$	&
$10_{130}$ 	&
$10_{131}$	&
$10_{132}$ 	&
$10_{133}$ 	&
$10_{134}$ 	&
$10_{135}$ 	&
$10_{136}$	&
$10_{137}$	&
$10_{138}$	 &
$10_{139}$	&
$10_{140}$	&
$10_{141}$	&
$10_{142}$	&

\\

$Size$ &
15 &
11 &
5 &
17  &
28 &
5 &
19 &
23 &
28 &
3 &
5 &
 5 &
3 &
3 &
3 &
3 &

\\

$Time$ &
0.11  &
0.03  &
0.01 &
 0.28 &
6.5 &
0.03 &
 0.62 &
1.50 &
3.80 &
0.03 &
 0.00 &
0.03 &
0.01 &
0.03 &
0.03 &
0.01 &

\\  \hline  \hline

$Knots$ &
$10_{143}$	&
$10_{144}$	&
$10_{145}$	&
$10_{146}$	&
$10_{147}$	&
$10_{148}$	&
$10_{149}$ 	&
$10_{150}$	 &
$10_{151}$	 &
$10_{152}$	&
$10_{153}$	&
$10_{154}$	&
$10_{155}$	&
$10_{156}$	&
$10_{157}$ 	&
$10_{158}$	&

\\

$Size$ &
3 &
3 &
3 &
3 &
3 &
15 &
15 &
28 &
28 &
11 &
15 &
13 &
5 &
5  &
7 &
3 &
 \\

$Time$ &
0.00 &
0.01 &
0.01 &
0.01 &
0.00 &
0.12 &
0.11 &
6.72 &
6.36 &
0.06  &
0.12  &
0.08  &
0.00  &
0.01  &
0.03  &
0.01  &

\\  \hline \hline

$Knot$ &
$10_{159}$	&
$10_{160}$	&
$10_{161}$	 &
$10_{162}$	 &
$10_{163}$	&
$10_{164}$	&
$10_{165}$	 & & & & & & & &  & &

\\

$Size$ &
3 &
3 &
5 &
5 &
3 &
3 &
3 & & & & & & & & & &

\\
$Time$  &
0.01 &
0.03  &
 0.01  &
 0.03  &
0.03 &
0.01 &
0.01 & & & & & &  & & & &

\\  \hline

\end{tabular}

}

\end{table}


\subsection{Proof of Unknotedness of Culprit Unknot}
{\tiny
\begin{verbatim}
============================== prooftrans ============================
Prover9 (32) version Dec-2007, Dec 2007.
Process 2448 was started by Alexei on Alexei-PC,
Sun Jun  9 13:41:59 2013
The command was "/cygdrive/c/Program Files (x86)/Prover9-Mace4/bin-win32/prover9".
============================== end of head ===========================

============================== end of input ==========================

============================== PROOF =================================

% -------- Comments from original proof --------
% Proof 1 at 0.03 (+ 0.03) seconds.
% Length of proof is 145.
% Level of proof is 22.
% Maximum clause weight is 27.
% Given clauses 56.

1 a1 = a2 & a2 = a3 & a3 = a4 & a4 = a5 & a5 = a6 & a6 = a7 & a7 = a8 & a8 = a9 & a9 = a10 # label(non_clause) # label(goal).  [goal].
3 x * x = x.  [assumption].
4 (x * y) * y = x.  [assumption].
5 (x * y) * (z * y) = (x * z) * y.  [assumption].
6 a1 = a9 * a7.  [assumption].
7 a9 * a7 = a1.  [copy(6),flip(a)].
8 a3 = a1 * a2.  [assumption].
9 a1 * a2 = a3.  [copy(8),flip(a)].
10 a2 = a3 * a4.  [assumption].
11 a3 * a4 = a2.  [copy(10),flip(a)].
12 a5 = a2 * a10.  [assumption].
13 a2 * a10 = a5.  [copy(12),flip(a)].
14 a6 = a5 * a4.  [assumption].
15 a5 * a4 = a6.  [copy(14),flip(a)].
16 a7 = a6 * a1.  [assumption].
17 a6 * a1 = a7.  [copy(16),flip(a)].
18 a8 = a7 * a4.  [assumption].
19 a7 * a4 = a8.  [copy(18),flip(a)].
20 a10 = a8 * a9.  [assumption].
21 a8 * a9 = a10.  [copy(20),flip(a)].
22 a4 = a10 * a3.  [assumption].
23 a10 * a3 = a4.  [copy(22),flip(a)].
24 a9 = a4 * a8.  [assumption].
25 a4 * a8 = a9.  [copy(24),flip(a)].
26 a2 != a1 | a2 != a3 | a4 != a3 | a5 != a4 | a6 != a5 | a6 != a7 | a8 != a7 | a8 != a9 | a10 != a9.  [deny(1)].
27 a2 != a1 | a3 != a2 | a3 != a4 | a5 != a4 | a6 != a5 | a6 != a7 | a8 != a7 | a9 != a8 | a9 != a10.  [copy(26),flip(b),flip(c),flip(h),flip(i)].
29 (x * y) * x = x * (y * x).  [para(3(a,1),5(a,1,1)),flip(a)].
30 ((x * y) * z) * y = x * (z * y).  [para(4(a,1),5(a,1,1)),flip(a)].
31 (x * (y * z)) * z = (x * z) * y.  [para(4(a,1),5(a,1,2)),flip(a)].
32 ((x * y) * z) * (u * (y * z)) = ((x * z) * u) * (y * z).  [para(5(a,1),5(a,1,1))].
33 (x * (y * z)) * ((u * y) * z) = (x * (u * z)) * (y * z).  [para(5(a,1),5(a,1,2))].
34 a1 * a7 = a9.  [para(7(a,1),4(a,1,1))].
35 (a9 * x) * a7 = a1 * (x * a7).  [para(7(a,1),5(a,1,1)),flip(a)].
36 (x * a9) * a7 = (x * a7) * a1.  [para(7(a,1),5(a,1,2)),flip(a)].
37 a3 * a2 = a1.  [para(9(a,1),4(a,1,1))].
39 (x * a2) * a3 = (x * a1) * a2.  [para(9(a,1),5(a,1,2))].
40 a2 * a4 = a3.  [para(11(a,1),4(a,1,1))].
41 (a3 * x) * a4 = a2 * (x * a4).  [para(11(a,1),5(a,1,1)),flip(a)].
42 (x * a3) * a4 = (x * a4) * a2.  [para(11(a,1),5(a,1,2)),flip(a)].
46 a6 * a4 = a5.  [para(15(a,1),4(a,1,1))].
48 (x * a5) * a4 = (x * a4) * a6.  [para(15(a,1),5(a,1,2)),flip(a)].
49 a7 * a1 = a6.  [para(17(a,1),4(a,1,1))].
52 a8 * a4 = a7.  [para(19(a,1),4(a,1,1))].
55 a10 * a9 = a8.  [para(21(a,1),4(a,1,1))].
57 (x * a9) * a10 = (x * a8) * a9.  [para(21(a,1),5(a,1,2))].
58 a4 * a3 = a10.  [para(23(a,1),4(a,1,1))].
60 (x * a10) * a3 = (x * a4) * a2.  [para(23(a,1),5(a,1,2)),rewrite([42(4)]),flip(a)].
61 a9 * a8 = a4.  [para(25(a,1),4(a,1,1))].
63 (x * a8) * a9 = (x * a4) * a8.  [para(25(a,1),5(a,1,2))].
64 (x * a9) * a10 = (x * a4) * a8.  [back_rewrite(57),rewrite([63(8)])].
69 a9 * (a7 * a9) = a1 * a9.  [para(7(a,1),29(a,1,1)),flip(a)].
71 a3 * a10 = a2 * a3.  [para(11(a,1),29(a,1,1)),rewrite([58(7)]),flip(a)].
72 a10 * a4 = a4 * a2.  [para(11(a,1),29(a,2,2)),rewrite([58(3)])].
73 a2 * (a10 * a2) = a5 * a2.  [para(13(a,1),29(a,1,1)),flip(a)].
77 a10 * a8 = a7.  [para(21(a,1),29(a,1,1)),rewrite([61(7),52(6)])].
78 a9 * a10 = a4 * a9.  [para(21(a,1),29(a,2,2)),rewrite([61(3)]),flip(a)].
79 a10 * (a2 * a3) = a4 * a10.  [para(23(a,1),29(a,1,1)),rewrite([71(7)]),flip(a)].
80 a9 * a4 = a4 * a7.  [para(25(a,1),29(a,1,1)),rewrite([52(7)])].
84 (x * a7) * a9 = (x * a1) * a7.  [para(34(a,1),5(a,1,2))].
89 a3 * (a2 * a3) = a1 * a3.  [para(37(a,1),29(a,1,1)),flip(a)].
92 a2 * (a4 * a2) = a1.  [para(40(a,1),29(a,1,1)),rewrite([37(3)]),flip(a)].
104 ((x * a9) * a4) * a8 = x * a10.  [para(21(a,1),30(a,2,2)),rewrite([63(6)])].
114 (x * a7) * a1 = (x * a1) * a6.  [para(49(a,1),5(a,1,2)),flip(a)].
115 (x * a9) * a7 = (x * a1) * a6.  [back_rewrite(36),rewrite([114(8)])].
116 (x * a8) * a4 = (x * a4) * a7.  [para(52(a,1),5(a,1,2)),flip(a)].
117 (x * a9) * a8 = (x * a10) * a9.  [para(55(a,1),5(a,1,2))].
118 a10 * (a4 * a9) = a8 * a10.  [para(55(a,1),29(a,1,1)),rewrite([78(7)]),flip(a)].
121 (x * a3) * a10 = (x * a4) * a3.  [para(23(a,1),31(a,1,1,2)),flip(a)].
122 (x * a10) * a9 = (x * a4) * a7.  [para(25(a,1),31(a,1,1,2)),rewrite([117(4),116(8)])].
128 a10 * (a8 * a10) = a7 * a10.  [para(77(a,1),29(a,1,1)),flip(a)].
146 ((x * a4) * a8) * (y * a10) = ((x * a9) * y) * a10.  [para(21(a,1),32(a,1,2,2)),rewrite([63(4),21(13)])].
155 ((x * a1) * a7) * (y * a9) = ((x * a7) * y) * a9.  [para(34(a,1),32(a,1,2,2)),rewrite([34(13)])].
167 ((x * a4) * a3) * (y * a10) = ((x * a3) * y) * a10.  [para(58(a,1),32(a,1,2,2)),rewrite([58(13)])].
176 (a2 * a3) * (x * a10) = (a3 * x) * a10.  [para(71(a,1),5(a,1,1))].
177 (x * a10) * (a2 * a3) = (x * a4) * a3.  [para(71(a,1),5(a,1,2)),rewrite([121(10)])].
178 ((a2 * a3) * x) * a10 = a3 * (x * a10).  [para(71(a,1),30(a,1,1,1))].
180 (x * (a2 * a3)) * a10 = (x * a4) * a2.  [para(71(a,1),31(a,1,1,2)),rewrite([60(10)])].
191 (x * a10) * (a4 * a9) = (x * a4) * a8.  [para(78(a,1),5(a,1,2)),rewrite([64(10)])].
192 ((a4 * a9) * x) * a10 = a9 * (x * a10).  [para(78(a,1),30(a,1,1,1))].
194 (x * (a4 * a9)) * a10 = (x * a4) * a7.  [para(78(a,1),31(a,1,1,2)),rewrite([122(10)])].
217 (x * a10) * ((y * a4) * a8) = (x * (y * a9)) * a10.  [para(21(a,1),33(a,1,1,2)),rewrite([63(6),21(13)])].
241 (x * a10) * ((y * a4) * a3) = (x * (y * a3)) * a10.  [para(58(a,1),33(a,1,1,2)),rewrite([58(13)])].
288 a1 * (a4 * a2) = a2.  [para(92(a,1),4(a,1,1))].
291 (a4 * a2) * a1 = a4 * (a4 * a2).  [para(92(a,1),29(a,2,2)),rewrite([4(5)]),flip(a)].
297 ((x * a4) * a2) * a1 = x * (a4 * a2).  [para(92(a,1),32(a,1,2)),rewrite([4(10)])].
300 a1 * (a4 * (a4 * a2)) = a2 * a1.  [para(288(a,1),29(a,1,1)),rewrite([291(9)]),flip(a)].
310 (a4 * a1) * a6 = a1 * (a10 * a7).  [para(78(a,1),35(a,1,1)),rewrite([115(5)])].
311 a1 * (a4 * a7) = a4.  [para(80(a,1),35(a,1,1)),rewrite([4(5)]),flip(a)].
314 a4 * (a4 * a7) = a1.  [para(311(a,1),4(a,1,1))].
315 (a1 * x) * (a4 * a7) = a4 * (x * (a4 * a7)).  [para(311(a,1),5(a,1,1)),flip(a)].
318 a4 * ((a4 * a2) * a7) = a9.  [para(311(a,1),29(a,2,2)),rewrite([114(5),310(5),315(9),5(8),72(4),29(12),19(11),25(10)])].
326 a1 * a4 = a4 * a9.  [para(314(a,1),29(a,1,1)),rewrite([29(9),19(8),25(7)])].
336 (a4 * a9) * a1 = a1 * (a4 * a1).  [para(326(a,1),29(a,1,1))].
338 (x * (a4 * a9)) * a4 = (x * a4) * a1.  [para(326(a,1),31(a,1,1,2))].
345 a7 * (a7 * a9) = a6 * a9.  [para(69(a,1),29(a,2,2)),rewrite([4(5),5(12),49(8)])].
350 ((x * a1) * a6) * a9 = x * (a7 * a9).  [para(69(a,1),32(a,1,2)),rewrite([84(4),155(8),114(4),4(10)])].
372 (a10 * a5) * a2 = a10 * (a10 * a2).  [para(73(a,1),29(a,2,2)),rewrite([4(5),5(12)]),flip(a)].
376 ((x * a10) * a5) * a2 = x * (a10 * a2).  [para(73(a,1),32(a,1,2)),rewrite([5(8),4(10)])].
417 (x * (a2 * a3)) * (a4 * a10) = (x * a4) * a3.  [para(79(a,1),5(a,1,2)),rewrite([177(14)])].
418 a1 * a3 = a5.  [para(79(a,1),29(a,2,2)),rewrite([121(5),40(3),3(3),89(5),176(10),11(6),13(6)])].
420 ((x * a4) * a2) * (a2 * a3) = x * (a4 * a10).  [para(79(a,1),30(a,2,2)),rewrite([180(6)])].
421 (a4 * a1) * a2 = a4 * a10.  [para(79(a,1),30(a,2)),rewrite([23(3),39(5)])].
429 a3 * (a2 * a3) = a5.  [back_rewrite(89),rewrite([418(8)])].
455 a2 * a3 = a4 * a9.  [para(37(a,1),41(a,1,1)),rewrite([326(3),40(7)]),flip(a)].
461 a3 * (a4 * a9) = a5.  [back_rewrite(429),rewrite([455(4)])].
468 ((x * a4) * a2) * (a4 * a9) = x * (a4 * a10).  [back_rewrite(420),rewrite([455(7)])].
470 (x * (a4 * a9)) * (a4 * a10) = (x * a4) * a3.  [back_rewrite(417),rewrite([455(3)])].
480 (x * a4) * a7 = (x * a4) * a2.  [back_rewrite(180),rewrite([455(3),194(6)])].
482 a9 * (x * a10) = a3 * (x * a10).  [back_rewrite(178),rewrite([455(3),192(6)])].
483 (x * a4) * a8 = (x * a4) * a3.  [back_rewrite(177),rewrite([455(5),191(6)])].
485 a8 * a10 = a4 * a10.  [back_rewrite(79),rewrite([455(4),118(5)])].
497 (x * (a4 * a9)) * a10 = (x * a4) * a2.  [back_rewrite(194),rewrite([480(10)])].
503 ((a4 * a9) * x) * a10 = a3 * (x * a10).  [back_rewrite(192),rewrite([482(10)])].
506 (x * (y * a9)) * a10 = (x * (y * a3)) * a10.  [back_rewrite(217),rewrite([483(6),241(7)]),flip(a)].
507 (x * a10) * (a4 * a9) = (x * a4) * a3.  [back_rewrite(191),rewrite([483(10)])].
509 ((x * a9) * y) * a10 = ((x * a3) * y) * a10.  [back_rewrite(146),rewrite([483(4),167(7)]),flip(a)].
510 ((x * a9) * a4) * a3 = x * a10.  [back_rewrite(104),rewrite([483(6)])].
515 a10 * (a4 * a10) = a7 * a10.  [back_rewrite(128),rewrite([485(4)])].
516 a10 * (a4 * a9) = a4 * a10.  [back_rewrite(118),rewrite([485(8)])].
517 (x * a4) * a2 = x.  [back_rewrite(497),rewrite([506(6),58(3),4(4)]),flip(a)].
518 a3 * (x * a10) = a10 * (x * a10).  [back_rewrite(503),rewrite([509(6),58(3),29(4)]),flip(a)].
529 x * (a4 * a9) = x * (a4 * a10).  [back_rewrite(468),rewrite([517(4)])].
531 x * (a4 * a2) = x * a1.  [back_rewrite(297),rewrite([517(4)]),flip(a)].
537 a7 * a10 = a4 * a10.  [back_rewrite(516),rewrite([529(5),515(5)])].
538 (x * a4) * a3 = (x * a4) * a10.  [back_rewrite(507),rewrite([529(6),5(6)]),flip(a)].
544 (x * a4) * a10 = x.  [back_rewrite(470),rewrite([529(4),4(8),538(4)]),flip(a)].
547 a4 * a10 = a5.  [back_rewrite(461),rewrite([529(5),518(5),515(5),537(3)])].
550 (x * a4) * a6 = (x * a4) * a1.  [back_rewrite(338),rewrite([529(4),547(3),48(4)])].
554 a1 * (a4 * a1) = a2 * a1.  [back_rewrite(300),rewrite([531(6)])].
558 a2 = a1.  [back_rewrite(288),rewrite([531(5),3(3)]),flip(a)].
569 x * a9 = x * a10.  [back_rewrite(510),rewrite([538(6),544(6)])].
591 a5 = a4.  [back_rewrite(421),rewrite([558(4),4(5),547(4)]),flip(a)].
592 (x * a4) * a1 = x.  [back_rewrite(48),rewrite([591(1),4(4),550(4)]),flip(a)].
593 a4 * a1 = a1.  [back_rewrite(336),rewrite([569(3),547(3),591(1),554(8),558(4),3(6)])].
617 x * (a10 * a1) = x * a10.  [back_rewrite(376),rewrite([591(3),558(5),592(6),558(4)]),flip(a)].
618 a10 = a1.  [back_rewrite(372),rewrite([591(2),72(3),558(2),593(3),558(2),3(3),558(4),617(6),3(4)]),flip(a)].
629 a9 = a1.  [back_rewrite(318),rewrite([558(3),593(4),34(4),569(3),618(2),593(3)]),flip(a)].
648 a3 = a1.  [back_rewrite(40),rewrite([558(1),326(3),629(2),593(3)]),flip(a)].
649 a1 != a4 | a6 != a4 | a6 != a7 | a8 != a7 | a8 != a1.  [back_rewrite(27),rewrite([558(1),648(4),558(5),648(7),591(10),591(14),629(22),629(25),618(26)]),flip(h),xx(a),xx(b),xx(d),xx(i)].
650 a1 = a4.  [back_rewrite(13),rewrite([558(1),618(2),3(3),591(2)])].
653 x * a8 = x * a4.  [back_rewrite(350),rewrite([650(1),550(4),650(3),4(4),629(1),650(1),629(4),650(4),19(5)]),flip(a)].
654 a8 = a4.  [back_rewrite(345),rewrite([629(3),650(3),19(4),653(3),19(3),629(3),650(3),46(4),591(2)])].
681 a6 = a4.  [back_rewrite(15),rewrite([591(1),3(3)]),flip(a)].
687 a7 = a4.  [back_rewrite(77),rewrite([618(1),650(1),654(2),3(3)]),flip(a)].
688 $F.  [back_rewrite(649),rewrite([650(1),681(4),681(7),687(8),654(10),687(11),654(13),650(14)]),xx(a),xx(b),xx(c),xx(d),xx(e)].

============================== end of proof ==========================

\end{verbatim}
}

\subsection{To the proof of Proposition~\ref{prop:simulation}}

We explain now how to construct  a sequence $(E(D_{i}),L_{i})$.

\begin{itemize}
\item For the initial diagram $D=D_1$ we set $E(D) = R_{D}$ and $L_{1}$ assigns to every arc the corresponding generator from $G_{D}$;
\item If $D_{i+1}$ is obtained from $D_{i}$ by an application of $RM_{1}^{\downarrow}$ (the Reidemeister rule of type I, decreasing the number of crossings), as shown in Figure~\ref{RM1down}, then we set $E(D_{i+1}) = E(D_{i}) \cup \{ \tau = \rho \}$. The labelling function $L_{i}$ is updated to $L_{i+1}$ as shown in the figure (remaining the same outside the picture);
\begin{figure}
\hspace*{3cm}\includegraphics[scale=0.25]{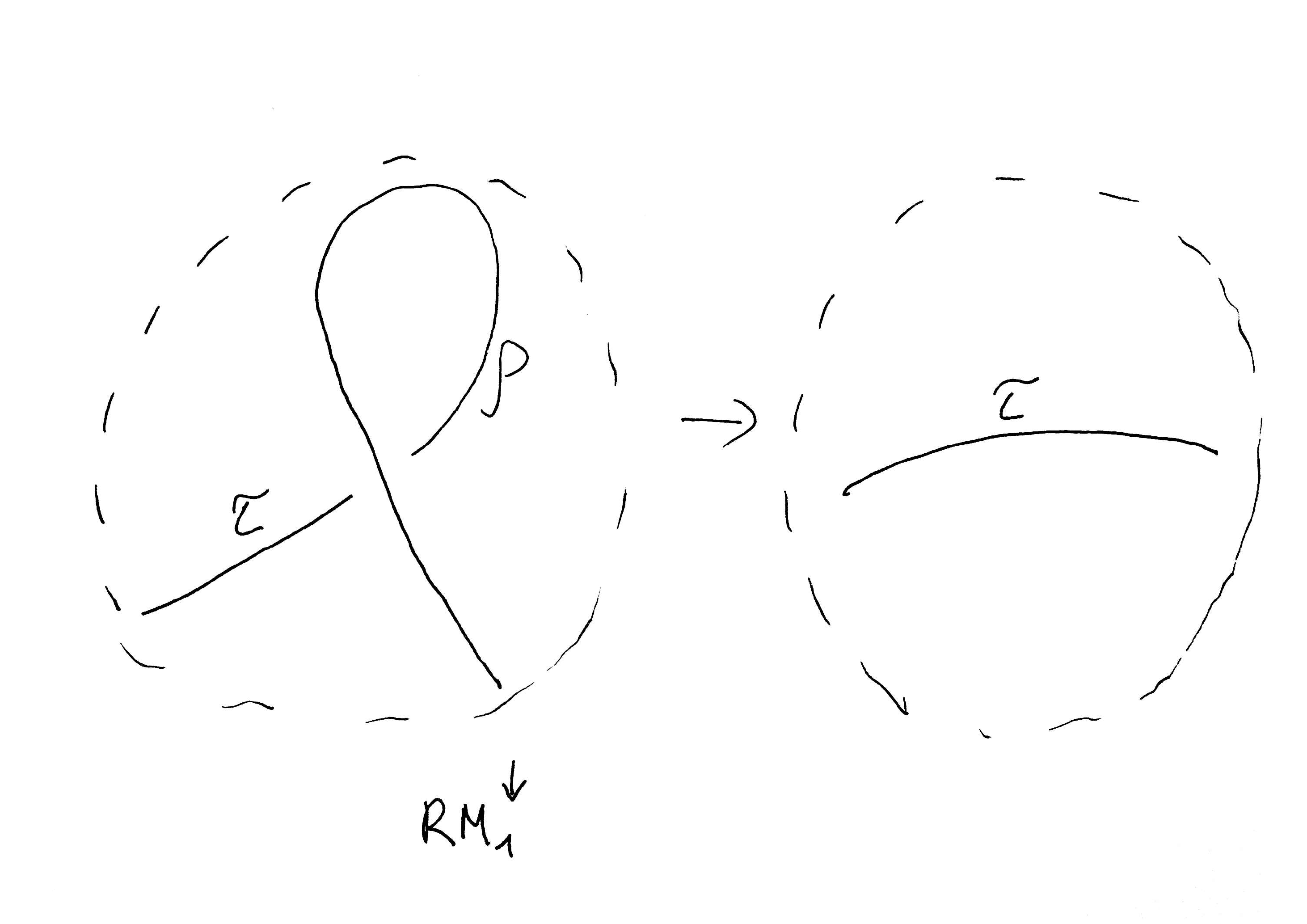}
\caption{$RM_{1}^{\downarrow}$}
\label{RM1down}
\end{figure}

\item If $D_{i+1}$ is obtained from $D_{i}$ by an application of $RM_{1}^{\uparrow}$ (the Reidemeister rule of type I, increasing the number of crossings) as shown in Figure~\ref{RM1up}, then a new arc labelled by $\rho$ is introduced;
$E(D_{i+1}) = E(D_{i}) \cup \{ \tau = \rho \}$. The labelling function $L_{i}$ is updated to $L_{i+1}$ as shown in the figure (remaining the same outside the picture);

\begin{figure}
\hspace*{3cm}\includegraphics[scale=0.25]{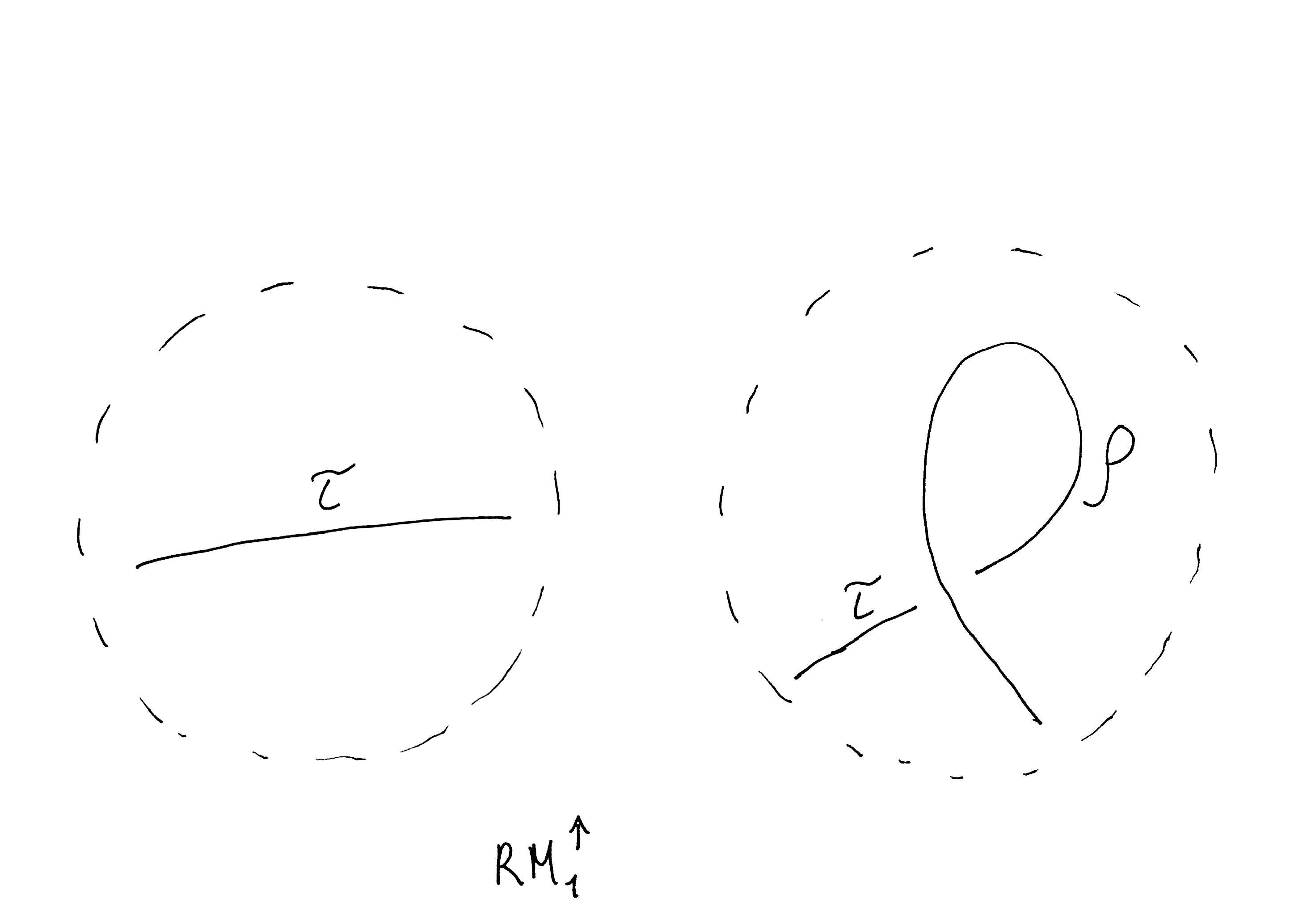}
\caption{$RM_{1}^{\uparrow}$}
\label{RM1up}
\end{figure}

 \item  If $D_{i+1}$ is obtained from $D_{i}$ by an application of $RM_{2}^{\uparrow}$ (the Reidemeister rule of type II, increasing the  number of crossings) as shown in Figure~\ref{RM2up}, then two new arcs labelled by $\theta$ and $\rho'$ are introduced;
$E(D_{i+1}) = E(D_{i}) \cup \{ \theta = \rho \triangleright \tau, \rho' = \theta \triangleright \tau \}$. The labelling function $L_{i}$ is updated to $L_{i+1}$ as shown in the figure (remaining the same outside the picture);

\begin{figure}
\hspace*{3cm}\includegraphics[scale=0.25]{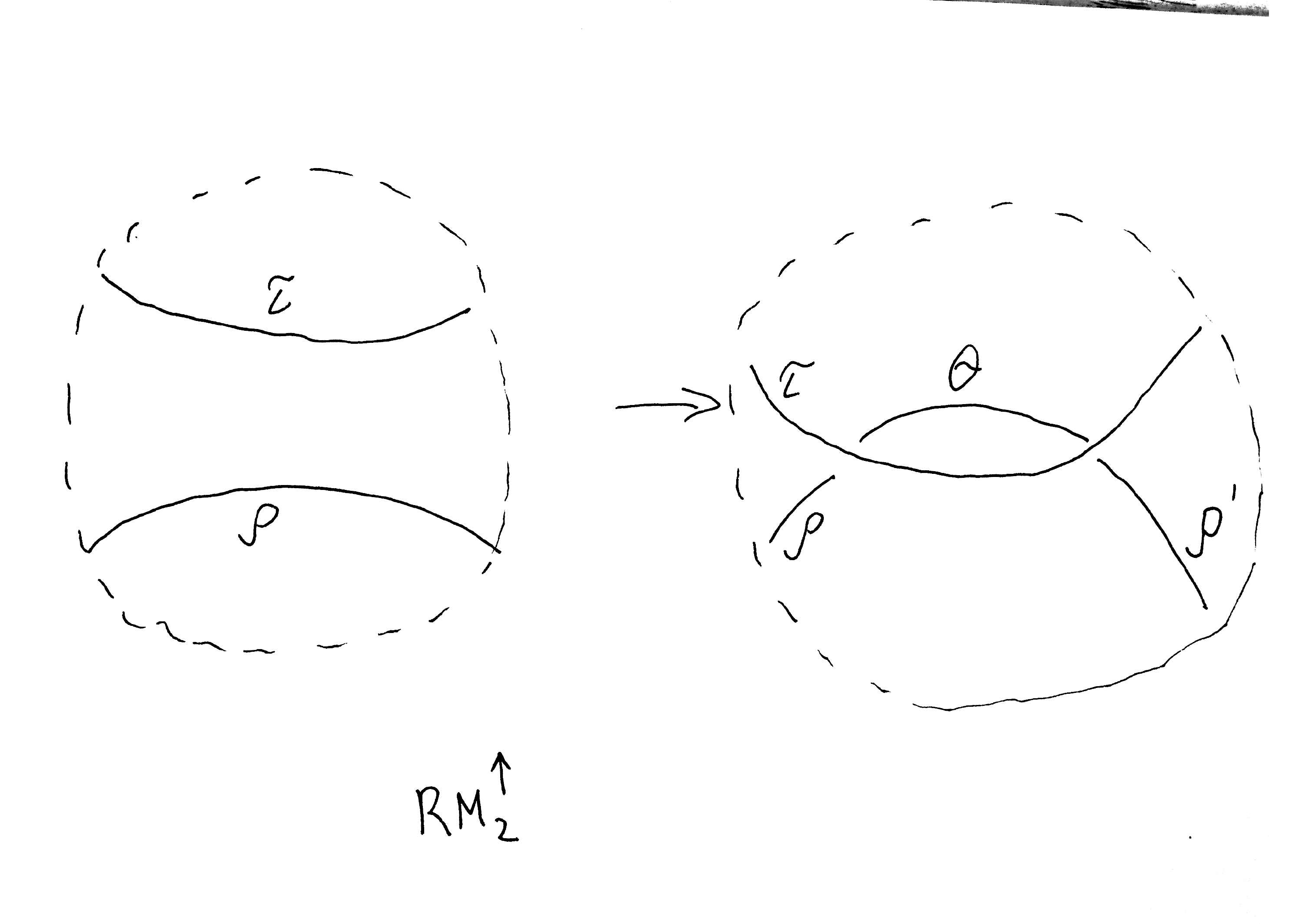}
\caption{$RM_{2}^{\uparrow}$}
\label{RM2up}
\end{figure}

\item   If $D_{i+1}$ is obtained from $D_{i}$ by an application of $RM_{2}^{\downarrow}$ (Reidemeister rule of type II decreasing the number of crossings) as shown in Figure~\ref{RM2down}, then $E(D_{i+1}) = E(D_{i}) \cup \{ \rho = \theta \}$. The labelling function $L_{i}$ is updated to $L_{i+1}$ as shown in the figure (remaining the same outside the picture);

\begin{figure}
\hspace*{3cm}\includegraphics[scale=0.25]{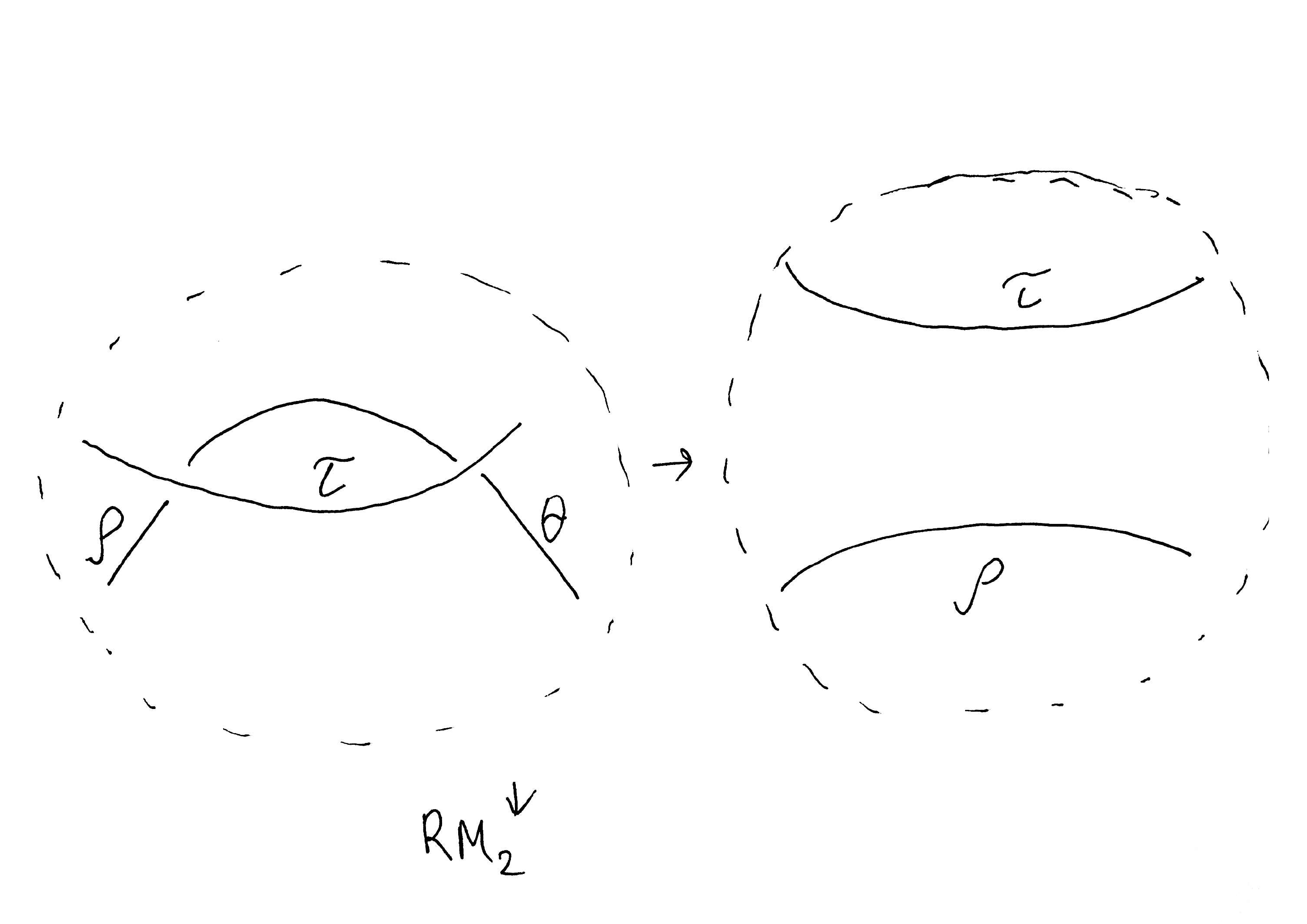}
\caption{$RM_{2}^{\downarrow}$}
\label{RM2down}
\end{figure}

 \item If $D_{i+1}$ is obtained from $D_{i}$ by an application of $RM_{3}$ (Reidemeister rule of type III) as shown in Figure~\ref{RM3}, then
$E(D_{i+1}) = E(D_{i}) \cup \{(\rho \triangleright \tau) \triangleright \theta = (\rho \triangleright \theta) \triangleright (\tau \triangleright \theta ) \}$. The labelling function $L_{i}$ is updated to $L_{i+1}$ as shown in the figure (remaining the same outside the picture).

\begin{figure}
\hspace*{3cm}\includegraphics[scale=0.25]{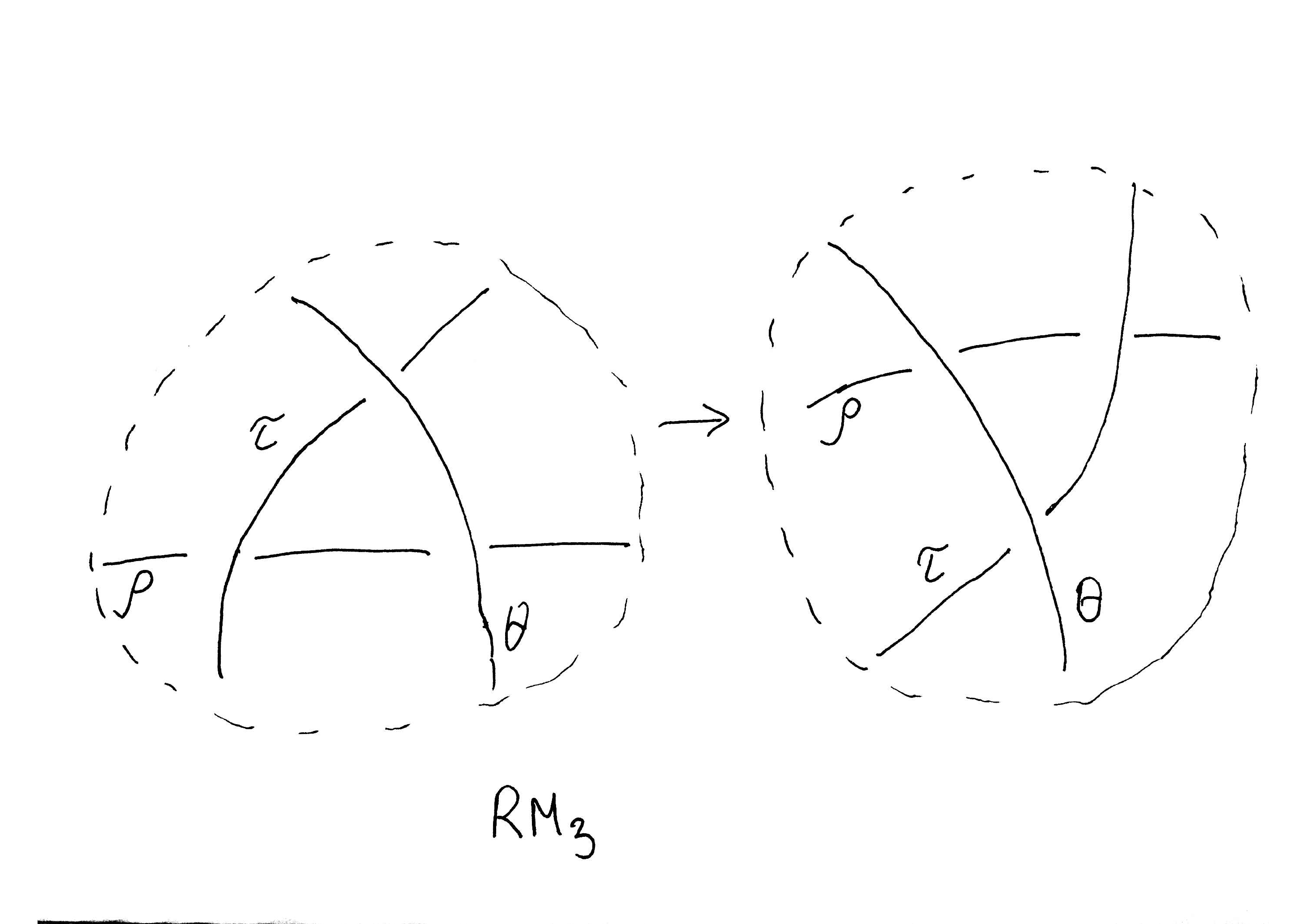}
\caption{$RM_{3}$}
\label{RM3}
\end{figure}

\end{itemize}

The required four properties can be checked by straightforward inspection.
$\Box$
$\Box$

\end{document}